\theoremstyle{plain}
\newtheorem{Thm}{Theorem}[section]
\newtheorem{Prop}[Thm]{Proposition}
\newtheorem{Lemma}[Thm]{Lemma}
\newtheorem{faulty}[Thm]{Erroneous Conclusion}
\theoremstyle{definition}
\newtheorem{Def*}{Definition}
\newtheorem{Cor}[Thm]{Corollary}
\begin{document}

\title{When only the last one will do}

\author{Johan W\"astlund \\
\small Department of Mathematical Sciences\\[-0.8ex]
\small Chalmers University of Technology, \\[-0.8ex] 
\small S-412 96 Gothenburg, Sweden\\[-0.8ex]
\small \texttt{wastlund@chalmers.se}
}
\date{\small \today} 

\maketitle

\begin{abstract}  
An unknown positive number of items arrive at independent uniformly distributed times in the interval $[0,1]$ to a selector, whose task is to pick online the last one. We show that under the assumption of an adversary determining the number of items, there exists a game-theoretical equilibrium, in other words the selector and the adversary both possess optimal strategies. The probability of success of the selector with the optimal strategy is estimated numerically to $0.352917000207196$.  
\end{abstract}


\section{Introduction}

An unknown number $n\geq 1$ of items are presented at independent uniformly distributed times in the interval $[0,1]$ to a selector, whose task is to choose online the last one. This is related to so-called parking and house-selling problems in the theory of optimal stopping, but has a more adversarial flavor. The problem was originally motivated by a generalization of the secretary problem to partially ordered sets (which we discuss in Section~\ref{S:poset}), but in view of the simplicity seemed worthy of study in its own right. 

We begin with some observations showing that the probability of success with a good strategy is a nontrivial constant. In Section~\ref{S:existence} we show that both the selector and the adversary possess optimal strategies. In Section~\ref{S:computing} we show how to compute numerically the probability of success for the selector at game-theoretical equilibrium. Finally in Section~\ref{S:poset} we explain the origin of the problem and its relation to the partially ordered secretary problem.

\medskip

{\bf Acknowledgments.} This paper has grown out of discussions in coffee breaks at conferences in the last ten years. I wish to thank in particular David Aldous, Thomas Bruss, Ragnar Freij, Svante Janson, Joel Spencer and Peter Winkler.  

\section{Basic observations} \label{S:basic}

Starting from the selector's point of view, we wish to maximize the probability of picking the last item under a worst-case scenario. We first give a simple proof that there is a strategy for the selector that achieves a probability of success bounded away from zero. 

\begin{Prop} \label{P:soft}
There is a strategy by which the selector succeeds with probability at least some constant $c>0$ for every $n$.  
\end{Prop}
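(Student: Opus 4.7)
I propose to analyze a simple adaptive-threshold strategy and prove its success probability is bounded below by an explicit positive constant via a union bound.

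\textbf{Strategy and easy case.} Fix $c \in (0,1)$. Upon the arrival of the $k$-th item at time $T_k$, accept it (if no item has yet been accepted) iff $T_k \geq 1 - c/k$. The threshold is motivated by thinking of $k/T_k$ as a maximum-likelihood estimate of the arrival rate: after $k$ items by time $T_k$ the last item is predicted near $1 - T_k/k$, and we accept once $T_k$ reaches this prediction. For $n = 1$ the success probability equals $P(T_1 \geq 1 - c) = c$.

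\textbf{Union bound for $n \geq 2$.} Let $E_k := \{T_k \geq 1 - c/k\}$. The strategy succeeds iff $E_n$ holds and $E_k$ fails for every $k < n$, so
\[
P(\text{success}) \ \geq \ P(E_n) - \sum_{k=1}^{n-1} P(E_k).
\]
Since $T_k$ is the $k$-th order statistic of $n$ i.i.d.\ uniforms, $E_k$ is the event that the interval $[1 - c/k, 1]$ of length $c/k$ contains at least $n-k+1$ of the items---a binomial tail probability of mean $nc/k$. In particular $P(E_n) = 1 - (1 - c/n)^n \geq 1 - e^{-c}$. Reindexing by $j = n - k + 1$ and letting $n \to \infty$, each summand converges (by Poisson approximation) to $P(\text{Poi}(c) \geq j)$; using $\sum_{j \geq 1} P(\text{Poi}(c) \geq j) = E[\text{Poi}(c)] = c$, the asymptotic lower bound is
\[
(1 - e^{-c}) - \bigl(c - (1 - e^{-c})\bigr) \ = \ 2(1 - e^{-c}) - c,
\]
which at $c = \log 2$ equals $1 - \log 2 > 0.3$.

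\textbf{Main obstacle.} The hard part is promoting this to a uniform-in-$n$ bound rather than just a limit. I would control the deviation from the Poisson limit via a Chernoff estimate of the form $P(\text{Bin}(n, p) \geq j) \leq (enp/j)^j e^{-np}$, which combined with $np = nc/(n-j+1) \leq 2c$ for $j \leq n/2$ gives a summable majorant for $\sum_{k < n} P(E_k)$; terms with $j > n/2$ are negligible by direct estimation. Small values of $n$ can be handled by direct computation (one finds, for instance, $P(\text{success}) = c(1-c)$ for $n = 2, 3$), so that a single explicit choice of $c$ produces the same positive constant bound for every $n \geq 1$.
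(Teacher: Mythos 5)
You take a genuinely different route from the paper's. The paper proves this proposition with a two-phase strategy: observe the count $k$ of items in $[0,1/2]$, then accept the first item after time $1-1/(2k)$, and appeal to the law of large numbers; uniformity in $n$ follows from the soft remark that each fixed $n$ gives positive success probability while the probabilities tend to $1/e$. Your strategy is a threshold strategy in the paper's later sense (accept the $k$th arrival if it lands after $1-c/k$), and the analysis is more quantitative, reducing to binomial tails via $P(E_k)=P(\mathrm{Bin}(n,c/k)\geq n-k+1)$ and the correct Poisson summation $\sum_{j\ge 1}P(\mathrm{Poi}(c)\ge j)=c$. This could in principle give an explicit constant, which the paper's argument does not; what the paper's version buys in return is the $1/e$ asymptotic, which it reuses later in the proof of Theorem~\ref{T:devilOpt}.

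Two cautions before declaring victory. The Chernoff majorant $(2ec/j)^j$ is too loose near $c=\log 2$: its $j=2$ term $(ec)^2\approx 3.5$ already dwarfs $1-e^{-c}=\tfrac12$, so the uniform union bound it yields is vacuous at that $c$. You would need to take $c$ small (roughly $c<0.12$ suffices, since then $e^2c^2<1-e^{-c}$), sharpen the tail estimate, or pass to the limit by dominated convergence and then close with the same compactness remark the paper uses. Relatedly, the union bound $P(E_n)-\sum_{k<n}P(E_k)$ is genuinely negative at small $n$ with $c=\log 2$ (about $-0.06$ at $n=3$, though the true probability there is $c(1-c)\approx 0.21$), so your plan to handle small $n$ by direct computation is not a convenience but a necessity. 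With these adjustments the argument goes through.
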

\begin{proof} The selector starts by observing the number $k$ of items arriving in $[0,1/2]$. The idea is to estimate the total number of items to roughly $2k$, in which case the remaining arrival times can be approximated by a Poisson point process of rate $2k$. In particular we expect that with probability roughly $1/e$, the interval $[1-1/(2k), 1]$ will contain exactly one item. Our strategy is therefore to wait until time $1-1/(2k)$ and accept the next one. 

Since we do not have an a priori probability distribution on $n$, we cannot talk about the probability distribution of $n$ conditioning on $k$. Instead we must fix the selector's strategy and analyze it for arbitrary $n$.
With the trivial modification of accepting the first item if $k=0$, the strategy will clearly achieve positive probability of success for each $n$, and therefore it suffices to analyze it for large $n$. 

For $\epsilon>0$ we have, by the law of large numbers, \begin{equation} \label{nk} \left(\frac12 - \epsilon\right)n < k <\left(\frac12 + \epsilon\right)n\end{equation} with high probability as $n\to\infty$. Conditioning on $k$ and supposing that \eqref{nk} holds, the probability that exactly one of the remaining $n-k$ items arrives after time $1-1/(2k)$ is $$(n-k)\cdot\frac1k\cdot \left(1-\frac1k\right)^{n-k-1} = (1+O(\epsilon))\cdot e^{-1}.$$

Therefore as $n\to\infty$, the probability of success will converge to $1/e$, which establishes the claim.
 \end{proof}
 
Thomas Bruss has pointed out that the \emph{odds-algorithm} \cite{B00, B03} provides a quite good heuristic for the selector. Although the criteria of the main theorem of \cite{B00} are not met, it suggests a strategy which accepts the $k$:th item if it arrives after time $1-1/(k+1)$. This strategy gives a success probability of at least $5/16$, the minimum occurring for $n=3$, but we have no computation-free proof of this.

The proof we have given here of Proposition~\ref{P:soft} essentially reduces the problem to choosing online the last event of a Poisson process of known rate. The idea of treating the arrival times as a Poisson point process also forms the basis for an informal argument that shows (in a way which can be made rigorous) that the selector cannot achieve uniformly in $n$ a better success probability than $1/e$. 

We think of the number $n$ as chosen by an adversary, the devil. The devil can choose $n$ from any probability distribution, and in particular may choose $n$ according to Poisson($\lambda$)-distribution. This means that the arrival times will be the times of the events in a rate $\lambda$ Poisson point process on $[0,1]$. There is a small technicality to resolve: The devil is strictly not allowed to choose $n=0$, but we can modify the rules of the game so that the selector wins by default if $n=0$.

The selector wins by default with probability $e^{-\lambda}$, and must otherwise choose online the last event of the Poisson process whose rate $\lambda$ we can even assume to be known to the selector. We will present a more general argument in detail later, but for the moment let us assume that the only reasonable thing the selector can do is to choose a point in time and decide to accept the next item. Then the selector wins if the number of remaining items is exactly 1. If the expected number of remaining items is $x$, then the probability of exactly one is $xe^{-x}$, which attains a maximum of $1/e$ for $x=1$. If our assumption about the strategy of the selector is correct, the devil can therefore keep the selector's winning probability below any number which is greater than $1/e$.

At this point we present an erroneous argument which pretends to demonstrate that, as for the classical secretary problem as well as the setting of \cite{B00}, a success probability of $1/e$ is achievable. This caused me some confusion, especially at the point when another argument seemed to establish the opposite.
The way it is presented here the argument lacks rigor in several respects, but as we will later see, the frivolous assumption of the existence of a game-theoretical equilibrium is not the most serious issue. Although incorrect, the argument may be of some interest as it reveals a certain paradoxical nature of the problem. I apologize to readers who are unable to follow it, but explaining it in greater detail is actually impossible.

\begin{faulty} \label{T:faulty}
There is a strategy for the selector that succeeds with probability at least $1/e$ regardless of $n$.
\end{faulty}

\begin{proof}[``Proof'']
First observe that an optimal strategy for the devil cannot be to choose $n$ from a distribution with finite support. If it were, then the selector's optimal strategy would \emph{always} accept the $N$:th item, where $N$ is the largest value of $n$ chosen with positive probability. But then the devil could switch strategy and instead choose $n=N+1$ and always win. 

Hence the devil will choose arbitrarily large values of $n$ with positive probability. But if $n$ is large, the strategy described in the proof of Proposition~\ref{P:soft} will achieve winning probability tending to $1/e$. Since at game-theoretical equilibrium, all values of $n$ in the devil's strategy must be equally likely to win, the selector's winning probability must be at least $1/e$.
\end{proof}

After this introduction, where our arguments have deteriorated from sub-optimal to non-rigorous to incorrect, let us start over again and be more careful.

\section{Existence of optimal strategies} \label{S:existence}

\subsection{A card game}
We wish to apply the von Neumann-Morgenstern equilibrium theorem, and therefore start by studying a similar game where each player has only finitely many pure strategies. 

In this game there is a deck of $d$ cards, and at the start of the game the devil labels the face side of some of them (labels representing items). There must be at least one labeled card, and at most $N$. The deck is then shuffled and the cards are turned up one by one. The selector, knowing $d$ and $N$, must select online the last labeled card.

The card game has only finitely many pure strategies for each of the players, and therefore by the von Neumann-Morgenstern theorem has a strategic equilibrium.
Next we establish some properties of the equilibrium strategies. We assume throughout that $d\geq N$. Notice that the devil's strategy is simply a probability distribution on $\{1,\dots,N\}$. 

\begin{Lemma} At any equilibrium in the card game, the devil's strategy gives positive probability to each of the numbers $1,\dots, N$. 
\end{Lemma}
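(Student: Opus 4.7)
I proceed by contradiction: suppose $(\pi^{*}, \sigma^{*})$ is an equilibrium with $\pi^{*}(k) = 0$ for some $k \in \{1, \dots, N\}$. The plan is to derive $u(k, \sigma^{*}) < v$, where $v$ is the equilibrium value, contradicting the equilibrium condition that $\min_{i} u(i, \sigma^{*}) = v$ (which follows from $\sigma^{*}$ being a best response to $\pi^{*}$ and $\pi^{*}$ a best response to $\sigma^{*}$).

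The central Bayesian observation is that $\pi^{*}(k) = 0$ forces the posterior probability of $n = k$ to vanish at any on-path observation involving $j = k$; hence accepting the $k$-th labeled card at any such on-path $(j = k, t)$ has expected payoff zero under $\sigma^{*}$. I split into two cases according to $\mathrm{supp}(\pi^{*})$.

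If $k$ exceeds $M := \max \mathrm{supp}(\pi^{*})$, then at the $M$-th labeled card the posterior concentrates on $n = M$ (since $\mathrm{supp}(\pi^{*}) \subseteq \{1, \dots, M\}$ and seeing $M$ labeled cards forces $n \geq M$), so $\sigma^{*}$ strictly prefers to accept (accept expectation $1$ versus pass expectation $0$). Against the devil's pure strategy $n = k > M$, the selector therefore accepts the $M$-th labeled card (or else an earlier one), neither of which is the last; so $u(k, \sigma^{*}) = 0 < v$, giving the desired contradiction.

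If instead some $n^{*} > k$ lies in $\mathrm{supp}(\pi^{*})$, then at any on-path $(j = k, t)$ observation passing has strictly positive expected payoff (the selector can still win against $n = n^{*}$) while accepting has zero, so $\sigma^{*}$ passes at every on-path such observation; consequently, against $n = k$ the selector fails to accept the last labeled card on all on-path trajectories. At off-path $(j = k, t)$ observations---those not reachable by any $n \in \mathrm{supp}(\pi^{*})$---the behaviour of $\sigma^{*}$ is unconstrained by optimality against $\pi^{*}$, but the contribution to $u(k, \sigma^{*})$ from these is bounded by the off-path probability under $n = k$, which should be controlled by combining the assumption $d \geq N$, the positive lower bound on $v$ (as in Proposition~\ref{P:soft}), and the fact that any off-path acceptance at $(j = k, t)$ must be preceded by passing decisions at earlier on-path observations that are themselves constrained by the best-response property of $\sigma^{*}$. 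The main obstacle is making this off-path estimate precise; the rest of the argument is a direct consequence of Bayesian optimality.
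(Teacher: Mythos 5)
Your Case 1 (where $k > M := \max\mathrm{supp}(\pi^*)$) is correct, and corresponds to the second sub-case of the paper's proof, though you reach the contradiction on the devil's side (showing $u(k,\sigma^*)=0 < v$) while the paper instead exhibits a strict improvement for the selector (always accept the $m$-th labeled card, where $m$ is the largest element of the support).

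Your Case 2 has a real gap, and it is not just a missing estimate: the route you chose cannot be completed. You want $u(k,\sigma^*) < v$, but $\sigma^*$ is genuinely unconstrained at observations $(j=k,t)$ that are off-path under $\pi^*$ yet occur with positive probability under $n=k$, namely $t > d-n_{\min}+k$ where $n_{\min}$ is the smallest element of $\mathrm{supp}(\pi^*)$ exceeding $k$. Under $n=k$ the $k$-th labeled card lands there with probability $1-\binom{d-n_{\min}+k}{k}/\binom{d}{k}$, which can be made arbitrarily close to $1$ (take $n_{\min}$ close to $d$), so if $\sigma^*$ chooses to accept at those positions, $u(k,\sigma^*)$ can well exceed $v$; best-response to $\pi^*$ imposes nothing there. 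Thus the devil's indifference/best-response condition alone cannot rule out $\pi^*(k)=0$. The paper avoids this entirely by contradicting the \emph{selector's} best-response condition instead: since accepting the $k$-th labeled card never pays off against $\pi^*$ (as $\pi^*(k)=0$), the selector may replace every such acceptance with a commitment to pass and later accept the $l$-th labeled card, where $l=n_{\min}$; this loses nothing against $\pi^*$ and gains whenever $n=l$, a strict improvement. To close your gap you should recast Case 2 in this form: first use $u(k,\sigma^*)\geq v>0$ to conclude that $\sigma^*$ accepts the $k$-th labeled card with positive probability, and then observe that replacing those acceptances by waiting for the $l$-th one strictly improves the selector's payoff against $\pi^*$ --- so $\sigma^*$ was not a best response.
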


\begin{proof}

First notice that at equilibrium, the selector must win with positive probability, since there is a trivial strategy that wins with probability $1/N$: Just make a guess on the number of labeled cards from uniform distribution on $\{1,\dots, N\}$.

It follows that at equilibrium, the selector's strategy must have the property that for every $k\leq N$, it \emph{may happen} that the $k$:th labeled card is accepted. By this we mean that there is some sequence of cards for which with positive probability the selector's strategy accepts the $k$:th labeled card, but no earlier labeled card. 
If the selector's strategy didn't have this property, then the devil could switch to a strategy of always choosing exactly $k$ labeled cards, which would then win with probability 1.

Now suppose for a contradiction that at an equilibrium, the devil's strategy gives probability 0 to some number $k\leq N$. 
We want to show that the selector can change strategy and strictly increase his winning chances. There are two cases: If there is some number greater than $k$ which is chosen by the devil with positive probability, then let $l$ be the smallest such number. The change of strategy will consist in never accepting the $k$:th labeled card, and instead (whenever the original strategy dictated accepting the $k$:th labeled card) waiting for the $l$:th labeled card and accept that.

If on the other hand there is no number greater than $k$ which is chosen by the devil with positive probability, then let $m<k$ be the largest number which is. The change of strategy will now consist in always accepting the $m$:th labeled card. Since the original strategy sometimes waits for the $k$:th labeled card, this will give a strictly increased winning probability.

Contrary to assumption we are not at equilibrium, and this contradiction establishes the claim.
\end{proof}

\begin{Cor} \label{C:optimized}
At any equilibrium in the card game, the selector's strategy has the property that it wins with the same probability for all $n=1,\dots, N$.
\end{Cor}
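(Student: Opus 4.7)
The plan is to apply the standard best-response (indifference) characterization of Nash equilibria in two-player zero-sum games. Let $p(n)$ denote the probability that the selector's equilibrium (possibly mixed) strategy picks the last labeled card, conditional on the devil labeling exactly $n$ cards. If $\mu$ is the devil's equilibrium distribution on $\{1,\dots,N\}$, then the selector's equilibrium winning probability equals $\sum_{n=1}^{N} \mu(n)\, p(n)$.

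In a finite zero-sum equilibrium, every pure strategy in the support of a player's equilibrium mixed strategy must be a best response against the opponent's equilibrium strategy. For the devil, a best response is any $n$ minimizing $p(n)$: if some $n_0$ with $\mu(n_0)>0$ satisfied $p(n_0) > p(n_1)$ for some other $n_1$, then shifting the mass $\mu(n_0)$ onto $n_1$ would strictly decrease the selector's expected winning probability, contradicting the equilibrium property of $\mu$. Hence $p(n)$ takes its minimum value at every $n$ with $\mu(n)>0$. By the preceding Lemma, $\mu(n) > 0$ for all $n \in \{1,\dots,N\}$, so $p(n)$ must be constant on $\{1,\dots,N\}$, which is precisely the claim.

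I do not anticipate any real obstacle: the corollary is essentially an immediate consequence of the preceding Lemma together with the standard indifference principle for finite two-player zero-sum games. The only point worth noting is that this principle does apply here, because both strategy spaces are finite (the devil picks an integer in $\{1,\dots,N\}$, and since the deck contains only $d$ cards the selector's pure strategies — functions from observed prefixes of the card sequence to accept/reject decisions — also form a finite set), so that the von Neumann--Morgenstern framework invoked earlier justifies the best-response characterization used above.
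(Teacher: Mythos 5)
Your argument is correct and is exactly the one the paper has in mind: the corollary is stated without proof precisely because it follows from the preceding Lemma by the standard best-response (indifference) property of finite zero-sum equilibria, which you spell out cleanly. The only minor thing I would note is that $p(n)$ should be understood as the selector's winning probability against $n$ averaged over any mixing in the selector's equilibrium strategy, which is how you have implicitly defined it, so the argument goes through as written.
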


We can assume without changing the players' winning chances that the selector, when deciding on the action on a labeled card, can take into account only the number of cards drawn earlier and the number of labeled cards among them, but not the actual positions (times) of the earlier labeled cards.
The selector's strategy can then be described by functions $f_1,\dots, f_N$, where $f_k:\{k,\dots, m\} \to [0,1]$, describes the probability $f_k(i)$ of accepting the $k$:th labeled card if it arrives as the $i$:th card turned up.

The next lemma shows that at equilibrium, each $f_k$ will have a sharp transition from 0 to 1 going via at most one value strictly between 0 and 1.  

\begin{Lemma} \label{L:sharp}
For $1\leq k \leq N$ and $k\leq i<m$, if $f_k(i) >0$, then $f_k(i+1) = 1$.
\end{Lemma}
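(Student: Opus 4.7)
The plan is a perturbation argument on the selector's strategy. Since the expected winning probability $P(\mathrm{win})$ is multilinear in the variables $f_k(i)$, its partial derivative $A_{k,i} := \partial P(\mathrm{win})/\partial f_k(i)$ is well defined at the equilibrium, and complementary slackness at a Nash equilibrium reads
\begin{equation*}
f_k(i) > 0 \implies A_{k,i} \geq 0, \qquad f_k(i) < 1 \implies A_{k,i} \leq 0.
\end{equation*}
So it suffices to show $A_{k,i+1} > A_{k,i}$, since then $f_k(i) > 0$ forces $A_{k,i+1} > 0$, which by the second condition yields $f_k(i+1) = 1$.

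To compute $A_{k,i}$ I would condition on the devil's choice $n$. Raising $f_k(i)$ helps directly when $n=k$ (the selector wins by accepting the $k$-th, which is also the last, labeled card) and hurts each $n>k$ (through the decreased probability of still being available to accept the $n$-th labeled card later). Factoring out the common positive factor $\binom{i-1}{k-1}\,S_k(i)$, where $S_k(i)$ is the probability that the selector has rejected all earlier labeled cards given that the $k$-th arrives at $i$, one obtains
\begin{equation*}
A_{k,i} \;\propto\; \frac{p_k}{\binom{d}{k}} - \sum_{n>k}\frac{p_n}{\binom{d}{n}}\sum_{j} \binom{j-1-i}{n-k-1}\,T(k,i;n,j)\,f_n(j),
\end{equation*}
with $T(k,i;n,j)$ the expectation of $\prod_{k<\ell<n}(1-f_\ell(I_\ell))$ when $I_{k+1}<\dots<I_{n-1}$ is uniform on ordered subsets of $\{i+1,\dots,j-1\}$.

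The core combinatorial step is Pascal's rule $\binom{j-1-i}{n-k-1}=\binom{j-2-i}{n-k-1}+\binom{j-2-i}{n-k-2}$ together with a conditioning of $T(k,i;n,j)$ on whether the $(k+1)$-th labeled card falls exactly at position $i+1$. After handling the boundary case $n=k+1$ separately (where the generic decomposition degenerates because $\binom{\cdot}{-1}=0$), the difference telescopes to the clean identity
\begin{equation*}
A_{k,i+1} - A_{k,i} \;=\; \frac{p_{k+1}}{\binom{d}{k+1}} \,-\, \bigl(1 - f_{k+1}(i+1)\bigr)\,A_{k+1,i+1}.
\end{equation*}
Strict positivity then follows from complementary slackness applied at $(k+1,i+1)$: the product $(1-f_{k+1}(i+1))\,A_{k+1,i+1}$ is either zero (when $f_{k+1}(i+1)=1$, or when $0<f_{k+1}(i+1)<1$, in which case $A_{k+1,i+1}=0$) or non-positive (when $f_{k+1}(i+1)=0$, in which case $A_{k+1,i+1}\leq 0$); in every case the right-hand side is at least $p_{k+1}/\binom{d}{k+1}$, which is strictly positive by the preceding lemma.

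The main anticipated obstacle is the combinatorial bookkeeping leading to the telescoping identity: keeping the indices $(k,i,n,j)$ straight in the Pascal rearrangement and verifying that the $n=k+1$ contribution (computed by hand as $p_{k+1}f_{k+1}(i+1)/\binom{d}{k+1}$) dovetails with the telescoped sum over $n\geq k+2$ to produce the clean closed form. Once this identity is established, the rest is immediate from complementary slackness.
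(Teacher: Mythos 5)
Your argument reaches the right conclusion by a genuinely different route from the paper. The paper gives a short conditioning argument: given that card $i+1$ is labeled, rejecting the $k$-th labeled card at position $i$ is strictly better than accepting; so if accepting at $i$ is weakly optimal overall, then conditioned on card $i+1$ being unlabeled, accepting must be strictly better — and this conditional situation is probabilistically identical to the $k$-th labeled card arriving at position $i+1$, which forces $f_k(i+1)=1$. Your computational version — differentiate the multilinear payoff, factor out the reach probability $\binom{i-1}{k-1}S_k(i)$, and telescope the first difference by decomposing on whether the $(k+1)$-th labeled card sits at $i+1$ — arrives at the algebraic shadow of that same conditioning, with the complementary-slackness case analysis of $(1-f_{k+1}(i+1))A_{k+1,i+1}$ playing the role of the paper's observation about card $i+1$ being labeled. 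What your derivation buys is an explicit first-difference identity exhibiting the mechanism; what the paper's buys is avoiding all the index bookkeeping, which is not negligible here. Two points need tightening. First, as written your identity equates quantities that have been normalized by \emph{different} $(k,i)$-dependent positive factors (e.g.\ $\binom{i-1}{k-1}S_k(i)$ for $A_{k,i}$ but $\binom{i}{k}S_{k+1}(i+1)$ for $A_{k+1,i+1}$); the complementary-slackness sign conditions are statements about the unnormalized derivatives, so you must note explicitly that positivity of the normalizing factors is what licenses transferring the sign conclusions back and forth, and you should use distinct notation for the normalized quantity. Second, the telescoping presupposes $k+1\le N$ so that $p_{k+1}$ and $f_{k+1}$ exist; the case $k=N$ must be dispatched separately — there the sum over $n>k$ is empty, the normalized derivative reduces to $p_N/\binom{d}{N}>0$ for every $i$, and $f_N\equiv 1$, which is also how the paper handles it via Corollary~\ref{C:optimized}. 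In both arguments, the substance of the preceding lemma (full support of the devil's distribution) enters precisely through the strict inequality $p_{k+1}>0$.
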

\begin{proof}
If $k=N$ then $f_k$ must be identically 1 since by Corollary~\ref{C:optimized} the selector's strategy must optimize the winning chances against each possible value of $n$. 

Suppose therefore that $k<N$ and assume that $f_k(i)>0$. Then, having fixed the devil's strategy and optimizing for the selector, accepting the $k$:th labeled card at position $i$ is either better than rejecting, or indifferent. Since $k<N$, there is a positive probability that card $i+1$ is labeled. If it is, then the selector will have positive probability of winning if card $i$ is rejected, but obviously zero probability of winning if card $i$ is accepted. Conditioning on card $i+1$ being labeled, the selector is strictly better off rejecting card $i$ than accepting. Therefore conditioning on card $i+1$ \emph{not} being labeled, the selector is strictly better off accepting card $i$ than rejecting it. Now observe that the situation after drawing the $k$:th labeled card as card $i$, conditioning on card $i+1$ not being labeled, is the same as if drawing the $k$:th labeled card as card $i+1$. 

This shows that the selector must strictly prefer accepting the $k$:th card in position $i+1$ over rejecting it, which implies that $f_k(i+1)=1$. 
\end{proof}

\subsection{Reducing the original game to the card game}

We wish to draw conclusions about the original game, which in the following we call the \emph{last-arrival game}, from the analysis of the card game. 
Starting from the last-arrival game, the following is a sequence of games that become more and more advantageous to the selector:

\begin{enumerate}
\item[(1)] We choose a number $N$ and restrict the devil's choices to $1\leq n \leq N$.
\item [(2)] We choose a number $d$ and divide the unit interval into $d$ equal time-slots. We allow the selector to wait until the end of the current time-slot before deciding whether to accept a newly arrived item. 
\item[(3)] Moreover, we decide that the selector wins by default if any two items arrive in the same time-slot (even if the selector had previously accepted an item).
\item[(4)] Instead, we first generate $N$ independent arrival times $t_1,\dots,t_N$ and declare the selector winner if any two of \emph{those} fall in the same time-slot. If not, then after the devil chooses a number $n$, a random subset of $n$ of the times $t_1,\dots,t_N$ are chosen as arrival times of the $n$ items.
\item[(5)] Before the game, a biased coin is flipped, and with probability $\binom{N}2/d$, the selector is declared winner. If not, then the $(N, d)$-card game is played. 

\end{enumerate}

Notice that $\binom{N}2/d$ is an over-estimate of the probability that two of the numbers $t_1,\dots,t_N$ fall in the same time-slot. Moreover, conditioning on no such collision will reduce the game (4) to the card game.

\subsection{The restricted game}
The game in (1) above, which is continuous time but with the devil's choices restricted to $n\leq N$ will be called the \emph{$N$-restricted} game. By a \emph{threshold strategy} for the selector, we mean a strategy given by a sequence $a_1,\dots, a_N$, where the $k$:th item is accepted if and only if it arrives after time $a_k$.  

\begin{Prop}
For every $N$, there is a threshold strategy which is optimal for the selector in the $N$-restricted game.
\end{Prop}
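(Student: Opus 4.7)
The plan is to transfer the near-threshold structure of card-game equilibria (Lemma~\ref{L:sharp}) to the $N$-restricted game via a limit as $d\to\infty$. For each $d\geq N$, fix an equilibrium of the $(N,d)$-card game. By Lemma~\ref{L:sharp}, for each $k=1,\dots,N$ there exist a cutoff $i_k(d)\in\{k,\dots,d\}$ and a number $p_k(d)\in(0,1]$ such that $f_k$ is $0$ below $i_k(d)$, equal to $p_k(d)$ at $i_k(d)$, and $1$ above. Set $a_k^{(d)}:=(i_k(d)-p_k(d))/d$ and let $\sigma_d=(a_1^{(d)},\dots,a_N^{(d)})$ be the corresponding threshold strategy in the $N$-restricted game. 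Using that an arrival conditioned to fall in a given time-slot is uniform within the slot, and that within-slot positions are independent given slot assignments, $\sigma_d$ implements exactly the card-game equilibrium rule on the event that no two of $N$ independent uniform arrivals share a slot.

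Next I would establish that $\sigma_d$ is uniformly near-optimal. Let $v:=\sup_\sigma\inf_D\Pr[\sigma\text{ wins against }D]$ denote the lower value of the $N$-restricted game, with $D$ ranging over distributions on $\{1,\dots,N\}$, and $v_d^c$ the value of the $(N,d)$-card game. Since game~(5) is at least as favorable to the selector as game~(1), we have $\binom{N}{2}/d+(1-\binom{N}{2}/d)v_d^c\geq v$, so $v_d^c\geq v-\binom{N}{2}/d$. By Corollary~\ref{C:optimized}, the card-game equilibrium wins against every $n\leq N$ with probability exactly $v_d^c$. Coupling the $N$-restricted game to the card game on the no-collision event, which has probability at least $1-\binom{N}{2}/d$, the $N$-restricted payoff of $\sigma_d$ against each $n\leq N$ is at least $(1-\binom{N}{2}/d)v_d^c\geq v-O(N^2/d)$, and hence $\inf_D\Pr[\sigma_d\text{ wins against }D]\geq v-O(N^2/d)$.

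Finally, by compactness of $[0,1]^N$, pass to a subsequence with $\sigma_{d_j}\to\sigma^*\in[0,1]^N$. For any threshold strategy $(a_1,\dots,a_N)$, the conditional win probability given $n$ items equals
\[
\Pr[T_1<a_1,\,\dots,\,T_{n-1}<a_{n-1},\,T_n\geq a_n],
\]
where $T_1<\dots<T_n$ are the order statistics of $n$ i.i.d.\ uniforms on $[0,1]$; this is a polynomial, and hence continuous, in $(a_1,\dots,a_n)\in[0,1]^N$. Therefore $\inf_D\Pr[\sigma\text{ wins against }D]=\min_{1\leq n\leq N}\Pr[\sigma\text{ wins}\mid n]$ is continuous on $[0,1]^N$, and taking $j\to\infty$ yields $\inf_D\Pr[\sigma^*\text{ wins against }D]\geq v$. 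Thus $\sigma^*$ attains the lower value, and is an optimal threshold strategy. The main obstacle is the uniform bookkeeping of the $O(N^2/d)$ slack coming from the collision probability $\binom{N}{2}/d$ and from the single possibly-randomized position per $k$, so that the inequality survives the subsequential limit \emph{uniformly} over all adversary distributions.
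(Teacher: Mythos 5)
Your proof is correct and follows essentially the same route as the paper: use Lemma~\ref{L:sharp} to extract near-threshold strategies from card-game equilibria, couple to the $N$-restricted game on the no-collision event, pass to a subsequential limit in $[0,1]^N$, and invoke continuity of the worst-case payoff. Your refinement of setting $a_k^{(d)}=(i_k(d)-p_k(d))/d$ so that the threshold reproduces the card-game randomization exactly is a cleaner version of what the paper does with $a_k(d)=i_k(d)/d$; the only small imprecision is calling the conditional win probability a polynomial in $(a_1,\dots,a_N)$ --- it is only piecewise polynomial once the thresholds need not be ordered, but it remains continuous, which is all the limiting step requires.
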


\begin{proof}
We think of $N$ as fixed throughout the argument. Lemma~\ref{L:sharp} shows that for each $d$, the $(N,d)$-card game can be played optimally by what is essentially a threshold strategy in the discrete setting, the only small crux being that randomization may be needed if the $k$:th labeled card arrives exactly at the threshold position.

Let $a_k(d) = i/d$, where $i$ is the smallest value for which $f_k(i)>0$ in an optimal strategy for the $(N,d)$-card game as in Lemma~\ref{L:sharp}. 
By playing the restricted game using the threshold strategy given by $a_1(d),\dots,a_N(d)$ for large $d$, the selector will be able to achieve a winning probability in the restricted game approaching that of the game (5) in the list.

If we regard $a(d) = (a_1(d),\dots,a_N(d))$ as a point in the $N$-dimensional unit cube, then as $d\to\infty$, the points $a(d)$ must have an aggregation point, and by continuity of the selector's worst case expected payoff under a threshold strategy as a function of the thresholds, the aggregation point must correspond to an optimal strategy for the selector in the $N$-restricted game.
\end{proof}

Therefore, to obtain an upper bound on the selector's winning probability in the last-arrival game, it suffices to establish an upper bound for threshold strategies in the $N$-restricted game for some $N$.

\subsection{Canonical thresholds}
Suppose that for a fixed $N$ and some number $0<\theta<1$, we try to construct a threshold strategy that achieves a probability of success of at least $\theta$ for every $n\leq N$. Then there is a canonical way of recursively computing thresholds $a_1,\dots, a_N$ such that if any sequence of thresholds achieves winning probability at least $\theta$, then $a_1,\dots, a_N$ does. 

Suppose we have fixed $a_1,\dots, a_{k-1}$. Then if there is a value for $a_k$ that gives probability at leat $\theta$ of success for $n=k$, then there is a maximal such value of $a_k$, and for that value of $a_k$, the probability of success is exactly $\theta$.   

The choice of $a_k$ obviously does not affect the probability of success for $n<k$. For $n>k$, the probability of success will increase with larger $a_k$, since a larger $a_k$ decreases the risk of accepting an item too early. Therefore after fixing $a_1,\dots, a_{k-1}$, we might as well set $a_k$ to the unique value that gives success probability exactly $\theta$. If we do this consistently, starting with $a_1 = 1-\theta$, then we will either get stuck at some point because not even $a_k=0$ gives winning probability $\theta$ for $n=k$, or we find a sequence of thresholds that works.

For each $k$, the threshold $a_k$ (to the extent it can be defined) is a decreasing function of $\theta$, but does not depend on $N$. For each $N$ there is  
a maximal achievable success probability $\theta_N$, characterized by $a_N = 0$.
The sequence $\theta_N$ is decreasing with $N$, and therefore has a limit $\theta_{opt}$ as $N\to\infty$.

We can now establish the existence of an optimal strategy for the selector in the last-arrival game.

\begin{Thm}\label{T:selectorOpt}
There is a threshold strategy which is optimal for the selector in the last-arrival game.
\end{Thm}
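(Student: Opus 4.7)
The plan is to take the threshold strategy given by the canonical thresholds $a_1, a_2, \ldots$ associated with the limiting value $\theta = \theta_{opt}$, and show that this strategy is optimal. First I would verify that the canonical construction does not get stuck at any finite stage when we set $\theta = \theta_{opt}$: since $\theta_{opt} \leq \theta_N$ for every $N$, and each threshold $a_k$ is a decreasing function of $\theta$ on the domain where it is defined, we have $a_k(\theta_{opt}) \geq a_k(\theta_N) \geq 0$ for every $k \leq N$. Hence for every $N$ the thresholds $a_1(\theta_{opt}), \ldots, a_N(\theta_{opt})$ are well-defined, producing an infinite sequence of thresholds in $[0,1]$. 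By the defining property of the canonical construction, the selector using these thresholds wins with probability exactly $\theta_{opt}$ against every value of $n \geq 1$, which yields the lower bound on the selector's best guaranteed winning probability.

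For the matching upper bound I would invoke the preceding analysis of the restricted game. For each $N$, by the reduction (1)--(5) through the card game and Lemma~\ref{L:sharp}, together with the previous proposition that an optimal strategy in the $N$-restricted game is realized by a threshold strategy, the value of the $N$-restricted game is exactly $\theta_N$: if some threshold strategy achieved worst-case winning probability $\theta > \theta_N$ against $n \leq N$, then the canonical construction with this $\theta$ would succeed up to stage $N$, contradicting the definition of $\theta_N$ as the maximal such value. Since any strategy for the selector in the last-arrival game is in particular a strategy in the $N$-restricted game, and the devil in the last-arrival game may always choose $n \leq N$, the selector's worst-case winning probability is at most $\theta_N$ for every $N$. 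Letting $N \to \infty$ yields the upper bound $\theta_{opt}$, which together with the lower bound shows that the canonical threshold strategy with $\theta = \theta_{opt}$ is optimal.

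The step I expect to be most delicate is arguing rigorously that the infinite threshold sequence $a_k(\theta_{opt})$ is well-defined and continues to deliver winning probability exactly $\theta_{opt}$ against arbitrarily large $n$. Monotonicity in $\theta$ guarantees existence at every finite stage, but one must also invoke continuity of the winning probability as a function of the thresholds, and as a function of $\theta$ through the recursive construction, in order to pass from the fact that $a_k$ gives winning probability $\theta_N$ against $n = k$ when $\theta = \theta_N$ to the corresponding statement at $\theta = \theta_{opt}$. Once continuity is in hand, the remainder is essentially bookkeeping using the reduction to the card game already set up earlier in the section.
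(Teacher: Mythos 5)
Your proof follows essentially the same route as the paper: establish that the canonical threshold recursion at $\theta = \theta_{opt}$ never gets stuck (via monotonicity of $a_k$ in $\theta$ together with $\theta_{opt} \leq \theta_N$ for every $N$), which yields a threshold strategy winning with probability $\theta_{opt}$ against every $n$, while the fact that $\theta_N$ bounds the value of the $N$-restricted game gives the matching upper bound $\theta_{opt} = \lim \theta_N$. The continuity concern you flag at the end is actually unnecessary: the canonical construction is applied directly at $\theta_{opt}$ rather than obtained as a limit of the constructions at $\theta_N$, so once each $a_k(\theta_{opt})$ is well-defined the recursion delivers success probability exactly $\theta_{opt}$ for every $n$ by its very definition, with no limiting argument required.
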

\begin{proof}
For each $\theta>\theta_{opt}$ there is some $N$ for which $\theta_N < \theta$, which means that success probability $\theta$ is impossible even in some restricted games. On the other hand, for $\theta = \theta_{opt}$, the sequence $a_k$ of thresholds can be computed indefinitely without getting stuck, and therefore success probability $\theta_{opt}$ is achievable with a threshold strategy in the last-arrival (unrestricted) game.  
\end{proof}

\subsection{Optimal strategy for the devil}
The fact that the devil has an optimal strategy does not seem to follow from any ``soft'' argument. Conceivably the selector's task could become more difficult the larger the number of items. But we already have indications that such a situation would entail $\theta_{opt} = 1/e$. We will make this argument precise, and therefore the first step towards proving the existence of an optimal strategy for the devil is to show that contrary to the ``conclusion'' of the introduction, $\theta_{opt}$ is strictly smaller than $1/e$. An independent proof of this based on numerical calculation is given in Section~\ref{SS:smallerthan1/e}.

\begin{Prop} \label{P:thetaopt<1/e}
$$\theta_{opt} < 1/e.$$
\end{Prop}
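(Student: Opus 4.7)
I plan to argue by contradiction. Suppose $\theta_{opt} \geq 1/e$. Then by Theorem~\ref{T:selectorOpt}, the canonical threshold sequence $(a_k)_{k\geq 1}$ targeting $\theta = 1/e$ is well-defined for every $k$, and the corresponding threshold strategy achieves success probability exactly $1/e$ against every $n\geq 1$.

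Set $c_k = k(1-a_k)$. A direct computation of $p_k$ by integration (assuming, as one verifies along the way, that the canonical thresholds are monotonically increasing) gives the identity
$$p_k = k(1-a_k)\cdot M_k = c_k M_k,$$
where $M_k$ denotes the probability that the ordered positions $T_1 < \cdots < T_{k-1}$ of $k-1$ i.i.d.\ uniform random variables on $[0,1]$ satisfy $T_j < a_j$ for every $j$. Setting $p_k = 1/e$ gives $c_k = 1/(eM_k)$; in particular $c_1 = 1/e$. The trivial bound $M_k \leq P(T_{k-1} < a_{k-1}) = a_{k-1}^{k-1}$ combined with the elementary inequality $(1-x/n)^n < e^{-x}$ yields $M_k < e^{-c_{k-1}}$, and therefore
$$c_k > e^{c_{k-1}-1}.$$
Since $e^{c-1} \geq c$ with equality only at $c=1$, the sequence $(c_k)$ is strictly increasing.

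Next I would use the Poisson-process limit of the top order statistics, namely $k(1 - T_{k-m+1}) \Rightarrow G_m$ where $G_m$ is the $m$-th event time of a unit-rate Poisson process, to extract asymptotic formulas. If $c_k \to c^* \in [0,\infty)$, then $M_k \to e^{-c^*}$, and the identity $c^*M^* = 1/e$ forces $c^* = e^{c^*-1}$, whose only solution is $c^* = 1$. By monotonicity, convergence would be from below and $c_k < 1$ would hold for every $k$. This possibility can be eliminated by explicit calculation of the first few iterates: $c_1 = 1/e$, $c_2 = 1/(e-1)$, and continuing the recursion one finds $c_k$ strictly exceeds $1$ at a small finite index. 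Monotonicity then precludes convergence to $1$, leaving only the alternative $c_k\to\infty$. But the analogous Poisson asymptotic for $k$ (rather than $k-1$) uniforms shows that the maximum possible value of $p_k$, attained at $a_k = 0$ and equal to $P(T_1 < a_1,\dots,T_{k-1}<a_{k-1})$ with $k$ uniforms, tends to $(1+c^*)e^{-c^*}$, which vanishes as $c^*\to\infty$. Hence eventually no choice of $a_k$ can achieve $p_k = 1/e$, and the canonical recursion must get stuck at a finite step~--- contradicting our assumption.

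The principal difficulty in carrying out this plan is the rigorous asymptotic analysis that connects the canonical $M_k$ and the maximum value of $p_k$ to the limiting parameter $c^*$, together with the bookkeeping needed to confirm (either analytically or by explicit iteration carried out far enough) that the canonical sequence $(c_k)$ genuinely crosses $1$ at a finite index.
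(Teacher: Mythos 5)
Your argument takes a genuinely different route from the paper's. The paper works directly with the optimal strategy: it shows $a_k\to1$, producing infinitely many records; at a record $a_k$, success at $n=k$ forces exactly one arrival after $a_k$, giving $\theta_{opt}\leq 1/e$ and, if equality held, $a_k = 1-1/k + o(1/k)$ at records; a pigeonhole over a window $m<k\leq 2m$ then produces a record with $a_k-a_{k-1}\geq 1/k^2 - o(1/k^2)$, whose extra $\approx 1/k$ failure probability overwhelms the $1/(2k)$ margin of $(1-1/k)^{k-1}$ over $1/e$. That proof is purely asymptotic and needs no numerics. Your route instead tracks $c_k = k(1-a_k)$ along the canonical recursion and exploits the fixed-point structure of $c\mapsto e^{c-1}$. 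This is an attractive idea, and your key inequality $c_k > e^{c_{k-1}-1}$ is actually more robust than you claim: it does not require monotonicity of the thresholds, since one always has $p_k \leq c_k M_k$ (with equality only when $a_k\geq a_{k-1}$), so $p_k\geq 1/e$ already forces $c_k\geq 1/(eM_k) > e^{c_{k-1}-1}$.

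However, there is a structural gap that prevents your plan from standing alone in the way the paper's proof does. The lower-bound recursion $c\mapsto e^{c-1}$ iterated from $c_1=1/e$ converges to $1$ strictly from below, so it is intrinsically incapable of pushing $c_k$ past $1$. To rule out $c_k\to 1^-$ you are forced to compute the actual values of $c_k$ (equivalently the paper's $b_k$'s, which give $c_6\approx 1.12$, $c_{10}\approx 3.4$, $b_{11}<0$) --- and this is precisely the numerical verification the paper already records in Section~\ref{SS:smallerthan1/e} as an ``independent'' alternative proof. So what you have is a hybrid of the paper's two proofs: it still leans on the explicit computation, while additionally needing the Poisson-limit asymptotics $M_k\to e^{-c^*}$ and $\max_{a_k}p_k\to (1+c^*)e^{-c^*}$, whose rigor (uniform control of $c_{k-m}$ over a window in $m$ that grows with $k$) you rightly flag as the main outstanding difficulty. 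If you're going to run the computation anyway, the simplest finish is the paper's: compute until $b_{11}<0$ and stop. The fixed-point machinery earns its keep only if it lets you avoid the computation, and here it does not.
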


\begin{proof} 
Clearly it suffices to consider the selector's optimal strategy. Let $a_k$ be the thresholds for this strategy. First observe that $a_k\to 1$ as $k\to\infty$. The reason is that if $a_k$ is too small, the selector will accept too early when $n=k+1$. It follows that there are infinitely many records in the sequence, by which we mean $a_k$ such that $a_k>a_i$ for every $i<k$. 

If $a_k$ is a record, then in order for the selector to succeed when $n=k$, there must be exactly one item arriving after time $a_k$. This shows that $\theta_{opt} \leq 1/e$, and if $\theta_{opt} = 1/e$, we must have $a_k = 1-1/k +o(1/k)$ whenever $a_k$ is a record. Now let $a_m$ and $a_n$ be records for large $m$, $n$ such that $n\geq2m$. Since $a_m = 1 - 1/m + o(1/m)$ and $a_n = 1-1/n+o(1/n)$, there must be a record $a_k$ for some $k$ in the interval $m<k\leq n$ such that $a_k-a_{k-1} \geq 1/k^2-o(1/k^2)$. 

Now consider the winning probability for $n=k$. The probability of exactly one item arriving in the interval $[a_k, 1]$ is maximized when $a_k = 1-1/k$ and therefore cannot be greater than $$\left(1-\frac1k\right)^{k-1} = \exp\left( (k-1)\log\left(1-\frac1k\right)\right) = e^{-1}\left(1+\frac{1}{2k}\right) + O\left(\frac{1}{k^2}\right),$$ as $k\to\infty$, by the Taylor expansion of $\exp((1/x-1)\log(1-x))$ where $x=1/k$.

Now condition on exactly one item arriving after time $a_k$. Then the strategy will still fail if there is an item arriving in the interval $[a_{k-1}, a_k]$. The probability of this is at least $1/k - o(1/k)$. Hence the selector's winning probability for $n=k$ is at most $$e^{-1}\left(1+\frac{1}{2k}-\frac{1}{k}\right) + o\left(\frac{1}{k}\right) < e^{-1}.$$ 
\end{proof}

\begin{Thm} \label{T:devilOpt}
The devil has an optimal strategy for the last-arrival game.
\end{Thm}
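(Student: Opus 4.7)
The plan is to mirror, for the devil, the finite-approximation and limiting argument already used for the selector. Applying the von Neumann--Morgenstern theorem to each $(N,d)$-card game yields an optimal mixed strategy $p^{(N,d)}$ for the devil as a probability vector on $\{1,\dots,N\}$. The simplex $\Delta_N$ is compact, so a subsequence of $p^{(N,d)}$ converges as $d\to\infty$ to some $p^{(N)}\in\Delta_N$; using the chain of game comparisons (1)--(5) from Section~3.2 together with a continuity argument paralleling Proposition~3.2, this $p^{(N)}$ is an optimal strategy for the devil in the $N$-restricted last-arrival game, attaining its value $\theta_N$.

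We now take $N\to\infty$. Viewing each $p^{(N)}$ as a probability measure on $\mathbb{N}$ (supported on $\{1,\dots,N\}$), the coordinates all lie in $[0,1]$, and a diagonal extraction produces a subsequence $(N_j)$ with $p^{(N_j)}(n)\to p^*(n)$ for every $n\ge 1$. The limit $p^*$ is a sub-probability measure on $\mathbb{N}$ of total mass $1-\delta$ for some $\delta\in[0,1]$. For any fixed selector strategy $\tau$, since $p^{(N_j)}(n)V(\tau,n)\ge 0$, the countable version of Fatou's lemma applied to $V(\tau,p^{(N_j)})=\sum_n p^{(N_j)}(n)V(\tau,n)\le\theta_{N_j}$, combined with $\theta_{N_j}\to\theta_{opt}$, gives
$$\sum_n p^*(n)\,V(\tau,n)\le\theta_{opt}.$$
If $p^*$ is a probability measure (that is, $\delta=0$), this inequality says exactly that $p^*$ is optimal for the devil in the last-arrival game.

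The principal obstacle is therefore tightness: ruling out $\delta>0$. This is precisely where Proposition~\ref{P:thetaopt<1/e} is indispensable. The heuristic is that any mass escaping to infinity could be exploited by the selector through the Proposition~\ref{P:soft}-style strategy $\tau^{\circ}$, whose value approaches $1/e$ for large $n$; since $1/e>\theta_{opt}$, escape is incompatible with $p^{(N_j)}$ being optimal for the devil. Quantitatively, if $\delta>0$ then for every $\epsilon>0$, for all sufficiently large $K$ and $j$, $p^{(N_j)}(\{n>K\})\ge\delta-\epsilon$ and $V(\tau^{\circ},n)\ge 1/e-\epsilon$ for $n>K$, whence $V(\tau^{\circ},p^{(N_j)})\ge(\delta-\epsilon)(1/e-\epsilon)$ must be $\le\theta_{N_j}\to\theta_{opt}$, yielding only $\delta\le e\,\theta_{opt}<1$. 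Sharpening this first-order bound to $\delta=0$ is the main technical difficulty; it requires supplementing Proposition~\ref{P:soft} with additional selector strategies---for instance the targeted strategies ``wait for the $k$-th item and accept it,'' which force $p^{(N)}(k)\le\theta_N$ for each $k$, together with mixtures of observation-window variants of Proposition~\ref{P:soft}---to pin down the mass of $p^{(N_j)}$ on every finite range and preclude any leakage in the limit. Once $\delta=0$ is secured, the displayed inequality certifies $p^*$ as optimal for the devil, and together with Theorem~\ref{T:selectorOpt} this exhibits the game-theoretical equilibrium $(\sigma^*,p^*)$ of value $\theta_{opt}$.
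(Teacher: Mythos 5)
Your overall architecture -- optimal devil strategies in the restricted games, a limit point viewed as a (sub-)probability measure on $\mathbb{N}$, and a tightness argument powered by Proposition~\ref{P:thetaopt<1/e} and the $1/e$-strategy of Proposition~\ref{P:soft} -- is exactly the paper's skeleton. But the decisive step, ruling out escape of mass ($\delta=0$), is precisely what you leave unproven: your quantitative attempt only yields $\delta\le e\,\theta_{opt}<1$, and you explicitly defer the sharpening to unspecified ``targeted strategies'' and ``mixtures of observation-window variants.'' These would not close the gap. Playing ``wait for the $k$-th item'' only bounds the devil's mass at individual points by $\theta_N$, which says nothing about the tail; and a fixed probabilistic mixture of the optimal threshold strategy $\sigma^*$ with the Poisson-style strategy $\tau^{\circ}$ cannot work either, because against the devil's mass on small $n$ the $\tau^{\circ}$-component only earns some constant $c$ that may be well below $\theta_{opt}$, so the mixture's guaranteed value need not exceed $\theta_{opt}$ when the escaping mass $\delta$ is small. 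The obstruction is that the selector must decide \emph{online}, from the data, whether to play for small $n$ or for large $n$.

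The paper's missing ingredient is a hybrid, data-driven strategy made possible by a pigeonhole step. Assuming tightness fails with some $\epsilon>0$, one fixes $\theta=\theta_{opt}+\tfrac{\epsilon}{2}(1/e-\theta_{opt})>\theta_{opt}$ and, for any $m_0$ and $\delta>0$, uses a dyadic pigeonhole argument to find $m\ge m_0$ and a devil strategy keeping the selector below $\theta$ that puts weight at most $\delta$ on the band $\{m+1,\dots,2m\}$ and at least $\epsilon$ on $\{n>2m\}$. The selector then counts arrivals in $[0,1/2]$ and, because the ambiguous band $(m,2m]$ carries negligible devil mass and the two remaining regimes differ by a factor of $2$, can distinguish $n\le m$ from $n>2m$ with arbitrarily small error; in the first case he plays the optimal threshold strategy (conditional value near $\theta_{opt}$), in the second the strategy of Proposition~\ref{P:soft} (conditional value near $1/e$). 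This yields overall value near $(1-\epsilon)\theta_{opt}+\epsilon/e>\theta$, contradicting the devil's guarantee and forcing tightness -- hence $\delta=0$ in your notation, after which your Fatou step does certify the limit as an optimal devil strategy. Without this ``sacrifice a sparse dyadic band and branch on the observed count'' construction, your proposal has a genuine gap at its central point.
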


\begin{proof}
For every $N$, it follows by compactness of the set of strategies that there is an optimal strategy for the devil in the $N$-restricted game. The optimal strategy is simply a probability distribution on $\{1,\dots,N\}$. It is easy to verify that if a sequence of optimal strategies for the $N$-restricted games have an aggregation point in the total variation metric, then that aggregation point must represent a strategy for the last-arrival game which keeps the selector's success probability to $\theta_{opt}$, and which is therefore optimal for the devil.  

In order to prove the existence of such an aggregation point, it suffices, in view of the principle of dominated convergence, to prove that for every $\epsilon > 0$ there is a $C$ such that no optimal strategy of any $N$-restricted game assigns probability more than $\epsilon$ to the set of numbers $\{n:n>C\}$.

Obviously it suffices to find a $C'$ such that only finitely many $N$-restricted games have optimal strategies that give probability more than $\epsilon$ to $\{n:n>C'\}$, since we can then choose $C$ beyond the largest of those values of $N$.

Therefore assume for a contradiction that there is some $\epsilon>0$ such that for every $C$ there are arbitrarily large $N$ for which some optimal strategy in the $N$-restricted game gives probability $\epsilon$ or more to $\{n>C\}$. This means that, eliminating $N$ from the argument and considering the last-arrival game, for every $C$ and every $\theta>\theta_{opt}$, the devil can find a strategy that assigns probability at least $\epsilon$ to $\{n>C\}$ and keeps the selector's success probability to at most $\theta$.

We choose $$\theta = \theta_{opt} + \frac{\epsilon}2\cdot \left(\frac1e - \theta_{opt}\right).$$
By Proposition~\ref{P:thetaopt<1/e}, $\theta > \theta_{opt}$. Now for every $m_0$ and every $\delta>0$, we can find an $m\geq m_0$ and a strategy for the devil that keeps the selector's winning probability below $\theta$, and which assigns weight at most $\delta$ to $\{m+1,\dots, 2m\}$, and at least $\epsilon$ to $\{n: n>2m\}$. Indeed, if we choose $C\geq m_0\cdot 2^{1/\delta}$, then not all of the intervals $m_02^i<n\leq m_02^{i+1}$ for $i\leq 1/\delta$ can get weight more than $\delta$.

The idea is now to construct a strategy for the selector that ``sacrifices'' the values of $n$ in the interval $m<n\leq 2m$, and in return obtains success probability close to $1/e$ for all $n>2m$. This is done by first observing the number of items in the interval $[0, 1/2]$ (here the number $1/2$ is an arbitrary number smaller than every threshold of the selector's optimal strategy). Our strategy will never guess that $n$ is in the interval $m<n\leq 2m$, and the first step is to decide, based on the number of items arriving in $[0,1/2]$, between the two options $n\leq m$ and $n > 2m$.
Provided $\delta$ is small enough and $m_0$ (and thereby $m$) is large enough, this decision can be made with an error probability which is as small as we please. 

If we find that $n\leq m$, then we play normally, which means that conditioning on $n\leq m$ we achieve a winning probability as close as we please to $\theta_{opt}$. If on the other hand we find that $n>2m$, then we switch to the strategy used in the proof of Proposition~\ref{P:soft}, namely to accept the first item arriving after time $1-1/(2k)$, where $k$ is the number of items observed in $[0,1/2]$. This gives the selector a conditional probability of success close to $1/e > \theta$ in case $n$ actually is larger than $2m$. 

In all, this means that we can construct a strategy for the selector which achieves success probability as close as we please to $$(1-\epsilon) \cdot \theta_{opt} + \epsilon\cdot \frac1{e} = \theta_{opt} + \epsilon\cdot \left(\frac1e-\theta_{opt}\right) > \theta,$$
a contradiction.
\end{proof}

\section{Computing $\theta_{opt}$} \label{S:computing}
In this section we show how we have computed numerically the probability $\theta_{opt}$ of success for the selector with the optimal strategy, while providing the Maple code that was used. We obtain the bounds
$$0.3529170002071955 < \theta_{opt} < 0.3529170002071958.$$
\subsection{The $P$-polynomials}
We already described the idea of prescribing a value for $\theta$, which recursively defines thresholds that achieve success probability $\theta$ for the selector, if possible. Here we show how to actually compute these thresholds numerically. The calculations turn out to involve certain polynomials which we describe first. 

Let $[n] = \{1,\dots,n\}$. We define the polynomial $P_n$ in the variables $x_1,\dots,x_n$ by $$P_n = \sum_f x_{f(1)} x_{f(2)} \cdots x_{f(n)}.$$ where the sum is taken over all $f:[n]\to[n]$ such that for $1\leq i\leq n$, $$\left|f^{-1}\left([i]\right)\right| \geq i.$$ In other words, a monomial is present in the sum if and only if $x_1$ occurs at least once, $x_1$ and $x_2$ together occur at least twice etc. 

We have $P_0=1$, $P_1 = x_1$, $P_2 = x_1^2 + 2x_1x_2$, $P_3= x_1^3 + 3x_1^2x_2 + 3x_1^2x_3 + 3x_1x_2^2 + 6x_1x_2x_3$, etc. In general, the number of terms is equal to the $n$:th Catalan number, and the sum of the coefficients is equal to $(n+1)^{n-1}$. These polynomials are related to parking functions, trees etc, for which there is an extensive combinatorial theory.

The polynomials $P_n$ are calculated for $n\leq N$ (here we have taken $N=20$) using the Maple code:
\begin{verbatim} 
N:=20; 
P[0]:=1: 
for n to N do 
   P[n]:=
     add(binomial(n, n-i)*x[1]^(n-i)*eval(P[i], 
     {x[1]=add(x[k], k=2..n-i+1), seq(x[j]=x[j+n-i], j=2..i)}), 
     i=0..n-1): 
od:
\end{verbatim}

We must avoid using the command {\tt simplify} on $P_n$, since that will make each polynomial roughly four times the size (in terms of memory) of the previous one, instead of twice.

If $x_i$ are the lengths of disjoint subintervals $I_i$ of the unit interval, then $P_n$ is the probability that of $n$ independent uniformly chosen points, at least one lies in $I_1$, at least two in $I_1\cup I_2$, at least three in $I_1\cup I_2\cup I_3$ etc. 

We can see how this relates to the last-arrival problem if we assume that we have fixed the thresholds $a_1,\dots, a_n$ for a strategy, and we ask for the probability of not accepting any item if there are $n$ items. If we have $$a_1\leq a_2\leq\dots\leq a_n,$$ then not accepting any item is equivalent to having at least one item appearing before time $a_1$, at least two before $a_2$ etc, which has probability $$P_n(a_1,a_2-a_1,\dots, a_n-a_{n-1}).$$
To cover also the case that the $a_i$'s are not increasing, we let $$\alpha_{i,n} = \min_{i\leq j\leq n} a_j.$$ Then $\alpha_{1,n}\leq \alpha_{2,n}\leq \dots \leq \alpha_{n,n}$, and not accepting any of $n$ items is equivalent to having at least $i$ items appear before time $\alpha_{i, n}$ for every $i$, and therefore has probability $$P_n(\alpha_{1,n}, \alpha_{2,n}-\alpha_{1,n},\dots, \alpha_{n,n}-\alpha_{n-1,n}).$$

Suppose we fix a number $\theta$. Then we can recursively obtain upper bounds $b_i$ on the thresholds  $a_i$ for a strategy that achieves success probability at least $\theta$.
First we let $b_1 = 1-\theta$. This is easily seen to be an upper bound on $a_1$ by choosing $n=1$. Next we consider $n=2$ and derive an upper bound on $a_2$. We label a particular item and estimate the probability of success for the selector by $n$ times the probability that the labeled item is the last one to arrive and is accepted (and no earlier item is accepted).

For $n=2$ we obtain \begin{equation} \label{n2} P(\text{success}) \leq 2\cdot (1-a_2)\cdot a_1.\end{equation}
Success (with the labeled item) requires that (a) the labeled item arrives after time $a_2$ and (b) that the other item arrives before time $a_1$. Notice that we do not necessarily have equality: If $a_2<a_1$ then the labeled item need not be accepted even if (a) and (b) both hold. From \eqref{n2} it follows that $$a_2\leq 1-\frac{P(\text{success})}{2a_1}.$$ If now $P(\text{success})\geq \theta$ and $a_1\leq b_1$ it follows that $$a_2\leq 1-\frac{\theta}{2b_1},$$ so we can take $b_2= 1-\theta/(2b_1)$.
In general we have $$P(\text{success}) \leq n\cdot (1-a_n)\cdot P_{n-1}(\alpha_{1,n-1}, \alpha_{2,n-1}-\alpha_{1,n-1},\dots, \alpha_{n-1,n-1}-\alpha_{n-2,n-1}).$$

Suppose now that we have established upper bounds $b_i$ on $a_i$, and from them similarly define $$\beta_{i,n} = \min_{i\leq j\leq n} b_j.$$
Then \begin{multline} \theta \leq P(\text{success})\\ \leq n(1-a_n)\cdot P_{n-1}(\alpha_{1,n-1}, \alpha_{2,n-1}-\alpha_{1,n-1},\dots, \alpha_{n-1,n-1}-\alpha_{n-2,n-1}) \\ \leq n(1-a_n)\cdot P_{n-1}(\beta_{1,n-1}, \beta_{2,n-1}-\beta_{1,n-1},\dots, \beta_{n-1,n-1}-\beta_{n-2,n-1}),\end{multline} and therefore $$a_n\leq 1-\frac{\theta}{n\cdot P_{n-1}(\beta_{1,n-1}, \beta_{2,n-1}-\beta_{1,n-1},\dots, \beta_{n-1,n-1}-\beta_{n-2,n-1})}.$$
If we recursively define $$b_n = 1-\frac{\theta}{n\cdot P_{n-1}(\beta_{1,n-1}, \beta_{2,n-1}-\beta_{1,n-1},\dots, \beta_{n-1,n-1}-\beta_{n-2,n-1})},$$
then for every strategy that achieves success probability $\theta$ we must have $a_k \leq b_k$ for every $k$.

When we actually compute the bounds $b_k$, it is easiest to work with only one sequence of $\beta$'s that are continuously updated. Each round of computing a new $b_n$ is as follows:

$$\beta_n := 1-\frac{\theta}{n\cdot P_{n-1}(\beta_{1}, \beta_{2}-\beta_{1},\dots, \beta_{n-1}-\beta_{n-2})},$$

For $1\leq i\leq n-1$, $$\beta_i:=\min(\beta_i, \beta_n).$$

$$b_n:=\beta_n.$$

The Maple code is (here with the choice $\theta=1/e$):

\begin{verbatim}
theta:=evalf(exp(-1)); 
beta[0]:=0: 
for n to N+1 do 
  beta[n]:=1-theta/n/eval(P[n-1], 
        {seq(x[i]=beta[i]-beta[i-1], i=1..n-1)}): 
  b[n]:=beta[n]: 
  print(n, b[n]): 
  for i to n-1 do 
    if beta[i]>b[n] then beta[i]:=b[n]: fi: 
  od: 
od:
\end{verbatim}

\subsection{Upper bound smaller than $1/e$} \label{SS:smallerthan1/e}

If we start from the ansatz $\theta = 1/e$, we get the upper bounds 
\begin{eqnarray} \notag
                            b_1 = 0.632120558828558\\ \notag
                            b_2 = 0.709011646565337\\ \notag
                            b_3 = 0.753159994034917\\\notag
                            b_4 = 0.781938205897688\\\notag
                            b_5 = 0.801087317290966\\\notag
                            b_6 = 0.812758005512094\\\notag
                            b_7 = 0.816930118314804\\\notag
                            b_8 = 0.810454956408292\\\notag
                            b_9 = 0.780396988248096\\\notag
                            b_{10} = 0.657338420431837\\\notag
                            b_{11} = -1.22035626433979\notag
\end{eqnarray}
Here $b_{11}$ is negative, and since no strategy can have $a_{11}<0$, the conclusion is that no strategy for the selector can achieve winning probability $1/e$ for all $n$. In fact the conclusion is slightly stronger: Even in the restricted game for $N=11$, there is no way the selector can win with probability at least $1/e$. Therefore we have an alternative proof of Proposition~\ref{P:thetaopt<1/e}.

As long as the $b_i$'s are increasing, we actually obtain a strategy that wins with the desired probability, so we also see that if the devil is restricted to $1\leq n \leq 7$, the selector can in fact obtain winning probability $1/e$.

A little experimentation shows that for $\theta=0.35$, the selector doesn't seem to be in trouble, while the slightly larger $\theta=0.355$ leads to $b_{20}<0$. It also seems that as soon as the $b_i$ 's start decreasing, they enter a losing spiral that quickly leads to a negative value. 

\subsection{Refinement}
To obtain a more precise upper bound on $\theta_{opt}$, we would like to extend the computation of the $b_i$'s to $i>20$. With the method described so far, the problem is that the polynomials $P_n$ grow exponentially in terms of computer memory, preventing us to go much further than to $n=25$. 
Therefore we will use an estimate based on the polynomials $P_1,\dots, P_N$ for a fixed $N$, where in practice we have taken $N=24$.
We write $\alpha_i$ for $\alpha_{i,n-1}$ to make the equation readable:
\begin{multline} \label{ineq}
P(\text{success}) \leq n(1-a_n)\\ \cdot \sum_{i=0}^N \binom{n-1}{i} \alpha_{n-N-1}^{n-1-i}\cdot P_i(\alpha_{n-i}-\alpha_{n-N-1}, \alpha_{n-i+1}-\alpha_{n-i},\dots,\alpha_{n-1}-\alpha_{n-2}).
\end{multline}

The explanation is straightforward: Suppose that one of the $n$ items is labeled. The probability of success is $n$ times the probability that the labeled item is last and is accepted. Success with the labeled item requires that item to appear in the interval $[a_n, 1]$. Moreover, for some $i$ with $0\leq i\leq N$, we must have exactly $i$ items appearing after time $\alpha_{n-N-1, n-1}$. Of those, at least one must appear in $[\alpha_{n-N-1, n-1}, \alpha_{n-i, n-1}]$, at least two in $[\alpha_{n-N-1, n-1}, \alpha_{n-i+1, n-1}]$ etc.

We can use \eqref{ineq} to compute upper bounds $b_n$ in the same way as before. The Maple code is:

\begin{verbatim}
for n from N+2 to 1000 do 
  beta[n]:=1 - theta/n/add(binomial(n-1,i)*beta[n-N-1]^(n-1-i)
  *eval(P[i], {x[1] = beta[n-i] - beta[n-N-1], 
  seq(x[j] = beta[n-i-1+j] - beta[n-i-2+j], j=2..i)}), i=0..N):
  b[n]:=beta[n]: 
  print(n, b[n]): 
  for i to n-1 do 
    if beta[i]>b[n] then beta[i]:=b[n]: fi: 
  od: 
od: 
\end{verbatim}

With $N=24$ and $\theta=0.3529170002071958$ this gives $b_{147}<0$. Therefore \begin{equation} \label{upperBound} \theta_{opt}<0.3529170002071958.\end{equation}

\subsection{Lower bound}
So far we have not established any explicit lower bound on the selector's probability of success, although it seems that our upper bound is quite sharp. In this section we construct a threshold strategy for the selector that guarantees winning probability close to the upper bound \eqref{upperBound}. We restrict our attention to strategies for which the sequence of thresholds $a_k$ is nondecreasing.  Again we start from an ansatz for $\theta$. For $n\leq N+1$, we compute the sequence by
\begin{equation} \label{small} a_n = 1 - \frac{\theta}{nP_{n-1}(a_1, a_2-a_1,\dots,a_{n-1}-a_{n-2})}.\end{equation}

To extend the computations we now need a lower bound on the probability of success for the selector. Since we will require the sequence of $a_i$'s to be increasing, we do not have to bother with introducing the $\alpha$'s and $\beta$'s.

\begin{Prop}
\begin{multline} \label{lowerBound}
\frac{P({\rm success})}n \geq (1-a_n)\\ 
\cdot \sum_{i=0}^{N}\binom{n-1}ia_{n-N-1}^{n-1-i}\cdot P_i(a_{n-i}-a_{n-N-1},a_{n-i+1}-a_{n-i},\dots,a_{n-1}-a_{n-2})\\ 
\cdot \left(1-\sum_{k=N+2-i}^{n-1-i}\binom{n-1-i}k\left(1-\frac{a_{n-i-k}}{a_{n-N-1}}\right)^k\right). 
\end{multline}
\end{Prop}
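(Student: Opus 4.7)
The plan is to label one of the $n$ items and, by symmetry, write $P(\text{success}) = n \cdot P(\text{labeled item is last and is the one the selector accepts})$. I would then bound this quantity below by summing the probabilities of an explicit family of pairwise disjoint favorable scenarios.

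For $0 \le i \le N$ and each $i$-subset $S$ of the $n-1$ non-labeled items, let $E_i^S$ be the event that (i) the labeled item falls in $[a_n,1]$; (ii) the items of $S$ all fall in $[a_{n-N-1},a_{n-1}]$ and their sorted positions $z_{(1)} \le \cdots \le z_{(i)}$ satisfy $z_{(\ell)} \le a_{n-1-i+\ell}$ for each $\ell$; (iii) the remaining $n-1-i$ items all fall in $[0,a_{n-N-1}]$ and satisfy ``at least $j$ in $[0,a_j]$'' for every $j=1,\dots,n-N-2$. These events are pairwise disjoint in $(i,S)$, since they assign each item to one of three fixed regions. Using $a_1 \le a_2 \le \cdots$ together with $i\le N$, one checks that on $E_i^S$ the labeled item is the strict maximum of all arrival times (so it is the last), and that the conditions above imply the parking-function criterion ``at least $j$ of the $n-1$ other items lie in $[0,a_j]$'' for \emph{every} $j<n$, which is exactly what is needed for the selector to reject everything before the labeled item. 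Hence $E_i^S$ implies success, and by symmetry $P(\text{success}) \ge n\sum_{i=0}^N \binom{n-1}{i} P(E_i)$, with $E_i$ denoting any one $E_i^S$.

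By independence of the item times, $P(E_i) = (1-a_n)\cdot A_i \cdot B_i$, where $A_i$ is the unconditional probability that $i$ iid uniform points on $[0,1]$ all land in $[a_{n-N-1},a_{n-1}]$ with the order-statistic bounds in (ii), and $B_i$ is the unconditional probability that $n-1-i$ iid uniform points all land in $[0,a_{n-N-1}]$ and satisfy (iii). Unwinding the interpretation of $P_i$ from the preceding subsection (the probability that $i$ iid uniform points have $\ell$-th order statistic at most $x_1+\cdots+x_\ell$) directly gives $A_i = P_i(a_{n-i}-a_{n-N-1},\, a_{n-i+1}-a_{n-i},\,\dots,\, a_{n-1}-a_{n-2})$. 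Writing $B_i = a_{n-N-1}^{n-1-i}\cdot P(\text{no early failure} \mid \text{all } n-1-i \text{ in } [0,a_{n-N-1}])$ reduces the task to bounding the remaining conditional factor from below.

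For that I would use $1$ minus a union bound. After conditioning, the $n-1-i$ items are iid uniform on $[0,a_{n-N-1}]$; failure at level $j \in \{1,\dots,n-N-2\}$ means fewer than $j$ lie in $[0,a_j]$, equivalently at least $n-i-j$ lie in $[a_j, a_{n-N-1}]$, an event of conditional probability at most $\binom{n-1-i}{n-i-j}(1 - a_j/a_{n-N-1})^{n-i-j}$. Reindexing $k=n-i-j$ (so $k$ sweeps $N+2-i,\dots,n-1-i$ as $j$ sweeps $1,\dots,n-N-2$) produces exactly the subtracted sum in \eqref{lowerBound}. The main obstacle is the bookkeeping in the disjointness-and-coverage step: the parking-function criterion must be verified on three separate ranges of $j$, namely $n-i\le j\le n-1$ (covered by the order-statistic bounds on $S$), $n-N-1 \le j \le n-i-1$ (automatic from $|S^c|=n-1-i\ge n-N-1$, which uses $i\le N$), and $j\le n-N-2$ (encoded in condition (iii)); one must also resist the natural but incorrect conditional-probability normalization of $A_i$ and recognize it as the unconditional $P_i$ at the stated arguments.
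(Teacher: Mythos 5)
Your proof is correct and follows essentially the same route as the paper: condition on which $i \le N$ of the non-labeled items land in $[a_{n-N-1},a_{n-1}]$, read off the joint probability of the ``late'' parking constraints as $P_i$ at the stated (unnormalized) arguments, and lower-bound the probability of the ``early'' parking constraints on $[0,a_{n-N-1}]$ by one minus a union bound, with $k=n-i-j$ producing the index range $N+2-i \le k \le n-1-i$. The paper's proof is just a terser version of this, listing the four conditions (labeled item in $[a_n,1]$; exact split $i$/$(n-1-i)$; order-statistic constraints on the $i$ late items; no early failure on the $n-1-i$ early items) and asserting that the inequality enters only through the last factor; you have simply made explicit the disjointness of the scenarios, the verification that each scenario implies the full parking criterion for all $j<n$, and the identification of the unconditional $P_i$, all of which are correct.
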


\begin{proof} Success with a labeled item is equivalent to the following conditions being satisfied: \begin{enumerate} \item The labeled item appears in $[a_n, 1]$. \item For some $i$ with $0\leq i\leq N$, there appears exactly $i$ items in the interval $[a_{n-N-1},a_{n-1}]$, and the remaining $n-1-i$ appear before $a_{n-N-1}$.
\item Of the $i$ items in the interval $[a_{n-N-1}, a_{n-1}]$, at least one arrives in the interval $[a_{n-N-1}, a_{n-i}]$, at least two in the interval $[a_{n-N-1}, a_{n-i+1}]$ etc.
\item For the $n-1-i$ items that arrive before $a_{n-N-1}$, there is no subset of $k$ items for $N+2-i\leq k \leq n-1-i$ that arrive in the interval $[a_{n-i-k}, a_{n-N-1}]$. 
\end{enumerate}
Condition 1 corresponds to the factor $(1-a_n)$, condition 2 to the sum where $i$ goes from 0 to $N$, condition 3 to the occurrence of $P_i$, and condition 4 to the last factor in the summand. That last factor is where in general we do not have equality.
\end{proof}

For ``intermediate'' $n$, we use the recursion \begin{multline} \label{intermediate} a_n =\\ 1 - \frac{\theta}{n\sum_{i=0}^{N} \binom{n-1}{i}a_{n-N-1}^{n-1-i}\left(1-\sum_{k=N+2-i}^{n-1-i}\binom{n-1-i}{k}\left(1-\frac{a_{n-i-k}}{a_{n-N-1}}\right)^k\right)\cdot P_i},\end{multline} where $P_i$ is evaluated with the arguments as in \eqref{lowerBound}. The Maple code is 
\begin{verbatim}
for n from N+2 to 1000 do 
  a[n]:=1 - theta/n/add(binomial(n-1, i)*a[n-N-1]^(n-1-i)*
  (1 - add(binomial(n-1-i,k)*(1-a[n-i-k]/a[n-N-1])^k, 
  k=N+2-i..n-1-i))*
  eval(P[i], {x[1] = a[n-i] - a[n-N-1], 
  seq(x[j] = a[n-i-1+j] - a[n-i-2+j], j=2..i)}), i=0..N): 
  print(n, a[n]): 
  if a[n]<a[n-1] then print("Darn!"): fi:
od: 
\end{verbatim}

With $N=24$, this seems to work for $\theta=0.3529170002071955$.
Next we turn to proving rigorously that the strategy works for all $n$.

\subsection{Construction of a strategy}
We describe a scheme for proving that a certain sequence $a_k$ gives winning probability at least $\theta$. Our sequences will be of the following form: For some $m$, the numbers $a_1,\dots,a_m-1$ are given by a specific list, and for $k\geq m$, $$a_k=1-\frac1{k}.$$ 
We have taken $\theta= 0.3529170002071955$. If we compute $a_n$ according to the previous section, we notice an interesting phenomenon. The quantity $n(1-a_n)$ is at first increasing, and seems to stabilize at around $1.29$. Then for $n$ at around $130$ to $160$, it suddenly drops to around $0.74$ where it stabilizes.  

This phenomenon can be understood at least intuitively. For large $n$, the most important factor for success is that exactly one item must arrive after $a_n$. If the probability of success is exactly $\theta$, this confines $a_n$ to a short interval around one of the points $1-x_1/n$ and $1-x_2/n$, where $x_1$ and $x_2$ are the two solutions to $xe^{-x}=\theta$, unless some other reason for failure has unusually large probability. The second most important reason for failure is if an item arrives between $a_{n-1}$ and $a_n$. This means that the sequence $a_n$ can take large steps if it is on the hill between $1-x_1/n$ and $1-x_2/n$. The strategy is in danger of failure as long as $a_n\approx1-1.29/n$, but if it manages to climb the hill at $1-1/n$ it will reach security and stabilize at $a_n\approx1-0.74/n$. 

The first time $a_n>1-1/n$ is at $n=147$ and we therefore take $m=147$. In other words, we construct a sequence $a_n$ by \eqref{small} for $n\leq 25$, by \eqref{intermediate} for $26\leq n\leq 146$, and by $a_n = 1-1/n$ for $n\geq 147$. In computer language we overwrite the old values of $a_n$ for $n\geq 147$:

\begin{verbatim}
for n from 147 to 1000 do 
  a[n]:=1-1/n:
od:
\end{verbatim}

We have used \eqref{lowerBound} to verify that with this sequence, the probability of success is at least $\theta$ for $n\leq 750$.

\subsection{Proving a lower bound on $\theta_{opt}$}
Our next task is to prove that the constructed strategy wins with probability at least $\theta$ also for large $n$. Success requires that exactly one item arrives after time $a_n$. We start by estimating the probability of this. 
\begin{Lemma}
The probability of exactly one item appearing after time $a_n$ is at least $1/e$.
\end{Lemma}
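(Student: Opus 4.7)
The plan is to exploit the explicit form $a_n = 1 - 1/n$ that our strategy uses for $n \geq 147$, as stipulated in the previous subsection. With $n$ items placed independently and uniformly in $[0,1]$, the count of items falling in $[a_n, 1]$ is binomially distributed with parameters $n$ and $1-a_n = 1/n$, so the probability that exactly one item lands in that interval equals
\[
n \cdot \frac{1}{n} \cdot \left(1 - \frac{1}{n}\right)^{n-1} = \left(1 - \frac{1}{n}\right)^{n-1}.
\]

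It then suffices to establish the elementary inequality $(1 - 1/n)^{n-1} \geq 1/e$ for $n \geq 2$. I would handle this by taking logarithms: the inequality rearranges to $-\log(1 - 1/n) \leq 1/(n-1)$, which follows from the standard bound $-\log(1-x) \leq x/(1-x)$ for $x \in (0,1)$ applied at $x = 1/n$. Equivalently, one may note that the sequence $(1-1/n)^{n-1}$ decreases monotonically to its limit $1/e$, which is the dual of the familiar fact that $(1+1/n)^n$ increases to $e$.

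There is no real obstacle here: the content of the lemma is essentially the fortuitous choice $a_n = 1-1/n$, which places the single-item probability $n(1-a_n)a_n^{n-1}$ at (the discrete version of) its asymptotic maximum over choices of $1-a_n$ of order $1/n$. The small positive gap between $(1-1/n)^{n-1}$ and $1/e$ at finite $n$ reflects the mild improvement over the Poisson limit, and this slack is what must later absorb the loss coming from conditioning on no item arriving between $a_{n-1}$ and $a_n$ (and similar events further back), in the remainder of the lower bound argument.
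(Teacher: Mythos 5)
Your proof is correct and takes essentially the same route as the paper: both plug in $a_n = 1 - 1/n$, identify the probability as $n(1-a_n)a_n^{n-1} = (1-1/n)^{n-1}$, and invoke the elementary fact that this exceeds $1/e$. The paper states the final inequality without justification, whereas you supply the (standard) logarithmic argument, so your write-up is if anything slightly more complete.
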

\begin{proof}
We estimate it by $$n(1-a_n)a_n^{n-1} = \left(1-\frac{1}{n}\right)^{n-1} >  1/e.$$
\end{proof}

In the following we condition on exactly one item after time $a_n$. With a rough estimate, the (conditional) probability that there are at least $k$ (out of $n-1$) items arriving in the interval $[a_{n-k}, a_n]$ is at most $$\binom{n-1}{k}\left(1 - \frac{a_{n-k}}{a_n}\right)^k.$$
Conditioning on exactly one item arriving after time $a_n$, the failure probability is therefore bounded by \begin{equation} \label{conditionalBound} \sum_{k=1}^{n-1}\binom{n-1}{k}\left(1 - \frac{a_{n-k}}{a_n}\right)^k.\end{equation}

It remains to bound \eqref{conditionalBound}, which naturally we do in two steps. We begin with the terms for which $a_{n-k}$ is given by $1-1/(n-k)$. Recall that this is when $n-k\geq m$ and that in the specific case we have in mind, $m=147$.

\begin{Lemma}
For $n > m$, we have \begin{equation} \label{generalBound} \sum_{k=1}^{n-m}\binom{n-1}k\left(1-\frac{a_{n-k}}{a_n}\right)^k\leq \frac{e+1}{n-2}.\end{equation}
\end{Lemma}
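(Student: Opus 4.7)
The first move is to use the explicit form $a_j = 1 - 1/j$ (valid for all $j \geq m$, hence for every $a$-value appearing in the sum, since $n > m$ and $n-k \geq m$) to derive
$$1 - \frac{a_{n-k}}{a_n} = \frac{k}{(n-1)(n-k)}.$$
Writing $T_k$ for the $k$-th summand, this gives $T_k = \binom{n-1}{k}\bigl(k/((n-1)(n-k))\bigr)^k$. I would pull off the $k=1$ term at once: $T_1 = 1/(n-1) \leq 1/(n-2)$, which is precisely the ``$+1$'' piece of the target constant $e+1$. It is important to keep this term \emph{exact}, since applying the crude bound $k^k/k! \leq e^k$ at $k=1$ would already give $T_1 \leq e/(n-1)$ and burn essentially all the slack.

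For $k \geq 2$, apply $\binom{n-1}{k} \leq (n-1)^k/k!$ to get the workhorse estimate $T_k \leq k^k/(k!\,(n-k)^k)$, and split the remaining sum at $k = \lfloor n/2 \rfloor$. When $2 \leq k \leq n/2$, the bound $n-k \geq n/2$ yields $T_k \leq (2k/n)^k/k!$, and the ratio of consecutive values of this envelope is $(2/n)(1+1/k)^k \leq 2e/n$; once $n \geq 4e$ this is at most $1/2$, so the partial sum is a geometric series dominated by twice the $k=2$ term, namely $16/n^2$. When $k > n/2$, combine $k^k/k! \leq e^k$ with $n-k \geq m$ to obtain $T_k \leq (e/m)^k$; provided $m \geq 2e$ (with enormous margin in the intended application $m=147$), this tail is a geometric series with ratio $\leq 1/2$ starting past $k = n/2$, contributing at most $(1/2)^{n/2}$, which is negligible.

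Combining, $S \leq 1/(n-2) + 16/n^2 + (1/2)^{n/2}$. The elementary inequality $16/n^2 \leq e/(n-2)$ holds for all $n \geq 3$, because the quadratic $en^2 - 16n + 32$ has negative discriminant $256 - 128e < 0$, so the full sum is $\leq (e+1)/(n-2)$ with room to spare. The main obstacle is really tight bookkeeping around the constant $e+1$: one must isolate the $k=1$ term exactly, and be careful not to invoke $k^k/k! \leq e^k$ in places where the resulting factor of $e$ cannot be afforded. Once the split is set up correctly, each individual estimate is routine and has ample margin.
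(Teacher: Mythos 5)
Your proof is correct, and it takes a genuinely different route from the paper's. Both proofs begin identically: compute $1-a_{n-k}/a_n=k/((n-1)(n-k))$ and isolate the exact $k=1$ term $T_1=1/(n-1)\le 1/(n-2)$, so as to reserve the ``$+1$'' in $e+1$ without wasting a factor of $e$ on it. After that you diverge. The paper applies $k^k/k!\le e^{k-1}$ to all terms, reducing the sum to $e^{-1}\sum_{k=1}^{n-m}(e/(n-k))^k$, then proves by a second-derivative computation that $(e/(n-k))^k$ is convex in $k$. Convexity plus the numerically verified inequality $e^2/(n-2)^2\ge (e/m)^{n-m}$ (so the $k=2$ value dominates the $k=n-m$ value, hence dominates the whole range $k\ge 2$) gives the bound $(n-m-1)\cdot e/(n-2)^2\le e/(n-2)$. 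Your argument instead splits at $k=n/2$, controls the range $2\le k\le n/2$ by showing the envelope $(2k/n)^k/k!$ has consecutive-term ratio at most $2e/n\le 1/2$ (so the sum is a geometric series $\le 2u_2=16/n^2$), and kills the $k>n/2$ tail (which is nonempty only when $n>2m$) with $(e/m)^k$ and $e/m\le 1/2$. The two routes buy different things: the paper's convexity argument avoids any case split and works uniformly, at the cost of the slightly delicate condition A and its monotonicity check; your split gives an asymptotically sharper middle bound ($16/n^2$ versus the paper's $e(n-m-1)/(n-2)^2\approx e/n$ when $n\gg m$) and relies only on elementary geometric-series comparisons, at the cost of needing $n\ge 4e$ and $m\ge 2e$ --- both vacuously satisfied here since $m=147$. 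One small tidying remark: your tail bound is more precisely $\le 2(e/m)^{\lfloor n/2\rfloor+1}\le\sqrt2\,(1/2)^{n/2}$ rather than $(1/2)^{n/2}$, but as you note this is overwhelmingly negligible against the slack in $16/n^2\le e/(n-2)$, so the conclusion stands.
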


\begin{proof}
For $n-k \geq m$ we have $$1-\frac{a_{n-k}}{a_n} = 1 - \frac{1-\frac{1}{n-k}}{1-\frac{1}{n}} = \frac{k}{(n-k)(n-1)}.$$
By an elementary integral estimate, $$\frac{k^k}{k!} \leq e^{k-1}$$ for every $k$. Therefore \begin{equation} \sum_{k=1}^{n-m}\binom{n-1}{k}\left(1 - \frac{a_{n-k}}{a_n}\right)^k \leq e^{-1}\sum_{k=1}^{n-m}\left(\frac{e}{n-k}\right)^k.\end{equation}
Differentiating the expression \begin{equation} \label{expr1} \left(\frac{e}{n-k}\right)^k\end{equation} twice with respect to $k$ gives \begin{equation}\left(\frac{e}{n-k}\right)^k\left(\log\left(\frac{e}{n-k}\right)+\frac{k}{n-k}\right)^2+\left(\frac{e}{n-k}\right)^k\left(\frac{2}{n-k}+\frac{k}{(n-k)^2}\right),\end{equation} which is obviously nonnegative. Hence \eqref{expr1} is convex regarded as a function of $k$. Therefore whenever \begin{equation} \label{conditionA}\frac{e^2}{(n-2)^2} \geq \left(\frac{e}{m}\right)^{n-m}, \end{equation} that is, when the first term is at least as large as the last term, we have $$e^{-1}\sum_{k=1}^{n-m}\left(\frac{e}{n-k}\right)^k \leq \frac{1}{n-1} + (n-m-1)\cdot\frac{e}{(n-2)^2}.$$ Since $$(n-2)^2\left(\frac{e}{m}\right)^n$$ decreases whenever $$n-2 \geq \frac{2}{-\log\left(\frac{e}{m}\right)},$$ it suffices to find a single value of $n$ for which \eqref{conditionA} is valid, and obviously equality holds when $n=m+2$.

Consequently $$\sum_{k=1}^{n-m}\binom{n-1}k\left(1-\frac{a_{n-k}}{a_n}\right)^k\leq \frac1{n-1}+\frac{e\cdot (n-m-1)}{(n-2)^2}$$ which in turn is $$\leq \frac{1}{n-2}+\frac{e\cdot(n-2)}{(n-2)^2} = \frac{e+1}{n-2}.$$ 
It is easy to check that this holds also when $n=m+1$. The bound \eqref{generalBound} now follows.
\end{proof}

The part of the sum in \eqref{conditionalBound} for which $k>n-m$ is estimated by (assuming that $n\geq 2m-3$) \begin{multline} \sum_{k=n-m+1}^{n-1} \binom{n-1}{k}\left(1-\frac{a_{n-k}}{a_n}\right)^k \leq \sum_{k=n-m+1}^{n-1} \binom{n-1}{m-2}(1-a_1)^k \\ \leq  (m-1)\cdot \frac{n^{m-2}}{(m-2)!}\cdot(1-a_1)^{n-m+1},\end{multline}
which is decreasing for $n\geq 2m$, since $$n\geq \frac{m-2}{-\log(1-a_1)}.$$ In our case it is smaller than $10^{-100}$.

Putting together these pieces, we obtain $$P(\text{success})\geq e^{-1} \cdot\left(1-\frac{e+1}{n-2} - 10^{-100}\right),$$
which clearly is larger than our choice of $\theta$ whenever $n>750$ (actually much earlier).
 We conclude that, with reservation for purely numerical errors, $$0.3529170002071955 < \theta_{opt} < 0.3529170002071958.$$

\section{The secretary problem for a partially ordered set} \label{S:poset}
The original motivation for considering the last-arrival problem was the secretary problem for a partially ordered set. The items (secretaries) are partially ordered and presented in random order to a selector, whose objective is to choose online a maximal element of the partial order. J.~Preater \cite{P99} showed that there is a strategy that achieves this with probability $1/8$, and this was improved to $1/4$ by N.~Georgiou, M.~Kuchta, M.~Morayne and J.~Niemiec \cite{G08} and further to $1/4+\epsilon$ by J.~Kozik \cite{K10}. Recently R.~Freij and the author \cite{FW10} found a strategy that achieves success probability $1/e$, thereby matching the upper bound from the classical secretary problem. 

Before finding the solution in \cite{FW10}, the analysis of several reasonable strategies showed that with decent probability they tend to accept the last maximal element to arrive. An idea was to consider a maximal element $x$ of the partial order, and to condition on all other maximal element arriving before $x$. As is shown below, this type of argument is strong enough to establish a nonzero probability of success in the partially ordered secretary problem.

Naturally this led to the conjecture that the correct generalization of the strategy for the classical secretary problem would turn out to accept the last maximal element of the partial order with probability at least $1/e$. 
For this conjecture it was natural to first consider a completely ``unordered'' set, and this is precisely the last-arrival problem. After some confusion, as described in the introduction, the conjecture turned out to be false, and therefore an analysis focusing only on the last arriving maximal element cannot establish winning probability $1/e$ for the partially ordered secretary problem. 

Nevertheless, it turns out that there is a strategy that works for an arbitrary partial order, and with probability uniformly bounded away from zero picks the last maximal element. The strategy builds on a strategy for the last-arrival problem, and uses the following result.

\begin{Lemma} There is a threshold strategy for the last-arrival problem which accepts the last item with probability at least $\theta_1>0$, and with probability at least $\theta_2>0$ does not accept any item. \end{Lemma}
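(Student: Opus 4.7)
The strategy I would use is the threshold strategy $(a_k)$ constructed in Section~\ref{S:computing}, whose thresholds are specific positive values for $k\le 146$ and equal $a_k = 1-1/k$ for $k\ge 147$. By the analysis in that section this strategy attains success probability at least $\theta_{opt}>0$ for every $n\ge 1$, so we may take $\theta_1 := \theta_{opt}$. The substantive content of the lemma is therefore the second claim, which asks for a uniform-in-$n$ lower bound on
$$\sigma(n) := P\bigl(t_k\le a_k \text{ for all } k=1,\dots,n\bigr),$$
where $t_1<\dots<t_n$ are the order statistics of $n$ i.i.d.\ uniform samples on $[0,1]$.

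The plan is to split according to the size of $n$. First I would handle bounded $n$: for any fixed $N$, each individual $\sigma(n)$ is strictly positive because the $a_k$ lie in $(0,1)$, so the minimum of $\sigma$ over $\{1,\dots,N\}$ is a positive constant. Second, for $n\ge 148$ I would condition on the event $E = \{t_n\le 1-1/n\}$, i.e.\ that all $n$ items arrive by time $1-1/n$. This event has probability $(1-1/n)^n \to e^{-1}$, hence is bounded below by an absolute constant. Conditional on $E$, the items are i.i.d.\ uniform on $[0,1-1/n]$, and the remaining task is to bound the conditional probability that $t_k\le a_k$ simultaneously for all $k<n$.

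To estimate this conditional probability I would treat the constraints in three ranges. For $k\le 146$ the threshold $a_k$ is a positive constant while $E[t_k\mid E]\sim k/n\to 0$, so each constraint holds conditionally with probability approaching $1$. For $147\le k\le n-\lceil\sqrt n\rceil$ the slack $a_k - E[t_k\mid E]$ dominates the fluctuation $\mathrm{sd}(t_k)\sim \sqrt{k(n-k)}/n$, and I would use a Chernoff/Bernstein-type bound together with a union bound over at most $n$ events to show that this block of constraints holds jointly with conditional probability $\ge 1-o(1)$. For the top block $n-\lceil\sqrt n\rceil\le k\le n-1$, where the slack and the fluctuation are of the same order $1/n$, I would use the Dirichlet representation of the joint distribution of $(t_k)$ (conditioned on $E$), equivalently a Poissonization argument representing spacings $t_k-t_{k-1}$ in terms of i.i.d.\ exponentials, to bound the joint probability of these last $\sqrt n$ constraints by an absolute positive constant.

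The hard part is clearly the third range, where no single constraint is satisfied with overwhelming probability and a genuine multivariate argument is needed. Combining the three ranges yields a conditional probability bounded below by some $c>0$; multiplying by $P(E)\ge e^{-1}-o(1)$ and by the bounded-$n$ lower bound gives a uniform constant $\theta_2>0$ with $\sigma(n)\ge \theta_2$ for every $n$, completing the proof.
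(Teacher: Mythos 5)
The paper gives this lemma essentially no proof; it observes only that the second statement is ``clear'' for the optimal strategy (with a nod to the non-threshold strategy of Proposition~\ref{P:soft}). Your plan --- condition on $E=\{t_n\le a_n\}$ and then bound the conditional probability that all remaining thresholds are respected --- is a reasonable way to make that remark rigorous, and the bounded-$n$ and middle-range steps are fine. However, your assessment of the top block is wrong, and this leads you to propose heavier machinery than the problem requires. Writing $j=n-k$, one finds that the slack $a_k - E[t_k\mid E]$ is of order $(j+1)/n$ while $\mathrm{sd}(t_k\mid E)$ is of order $\sqrt{j+1}/n$, so the slack exceeds the fluctuation by a factor of order $\sqrt{j+1}$; they are of the same order $1/n$ only at $j=0$, which is precisely the single constraint absorbed by conditioning on $E$. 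In fact, given $E$, the event $t_{n-j}>a_{n-j}$ requires at least $j+1$ of the $n$ points to land in an interval of relative length $j/\bigl((n-j)(n-1)\bigr)\approx j/n^2$, which has probability roughly $(e/n)^{\,j+1}$. These decay geometrically in $j$, so a plain union bound already yields a conditional failure probability of order $1/n$. This is essentially the same union-bound computation the paper performs in Section~\ref{S:computing} --- the estimate $\sum_k\binom{n-1}{k}\bigl(1-a_{n-k}/a_n\bigr)^k\le (e+1)/(n-2)+10^{-100}$ used to prove the lower bound on $\theta_{opt}$ --- applied with zero rather than one item after $a_n$. The Dirichlet/Poissonization argument you reserve for the top block is therefore unnecessary; what you flag as the genuinely multivariate ``hard part'' is in fact the easy part, and reusing the paper's own union bound closes the proof directly.
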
 

We have proved the first part of the statement in detail, and it is clear that the optimal strategy must also satisfy the second part (so does the strategy described in Proposition~\ref{P:soft}, but it is not a threshold strategy).

We now apply the following strategy for the partially ordered secretary problem: Accept an element if (1) it is maximal in the induced partial order of elements seen so far, and (2) arrives at a time later than $a_k$, where $k$ is the number of maximal elements in that induced partial order.

\begin{Prop} This strategy for the partially ordered secretary problem that with probability at least $\theta_1\theta_2>0$ picks the last maximal element. \end{Prop}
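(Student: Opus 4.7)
The plan is to decouple the contributions of maximal and non-maximal elements, treating each as an independent instance of the last-arrival game and invoking the preceding lemma twice. Let $(a_k)_{k\geq 1}$ be a nondecreasing threshold sequence from the lemma; if the sequence provided is not already monotone, I would replace each $a_k$ by $\max(a_1,\dots,a_k)$ and check that this more conservative strategy still achieves both a positive probability $\theta_1$ of accepting the last item and a positive probability $\theta_2$ of rejecting everything. Let $M$ denote the set of maximal elements of the partial order, $n=|M|$, and let $x\in M$ be the last-arriving element of $M$.

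I would introduce two events: $E_1$, that applying the threshold strategy $(a_k)$ to the $n$ arrival times of $M$-elements alone, viewed as an instance of the last-arrival game, accepts the last arrival (namely $x$); and $E_2$, that applying $(a_k)$ to the arrival times of the non-$M$ elements, viewed as another last-arrival game, accepts no item. Since the two groups of arrival times are mutually independent iid uniform samples, $E_1$ and $E_2$ depend on disjoint sources of randomness and are therefore independent. The lemma then gives $P(E_1\cap E_2)=P(E_1)\cdot P(E_2)\geq\theta_1\theta_2$.

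The heart of the argument is to verify that on $E_1\cap E_2$ the poset strategy in fact accepts $x$. For an $M$-element $m_j$ arriving as the $j$-th $M$-arrival with $j<n$, all $j$ previously arrived $M$-elements remain maximal in the induced subposet (being maximal in the full order), so the joint max-counter $k(m_j)$ satisfies $k(m_j)\geq j$; by monotonicity of $(a_k)$ and $E_1$, $T(m_j)\leq a_j\leq a_{k(m_j)}$, so $m_j$ is rejected. For a non-$M$ element $u$ that is maximal in the induced subposet at its arrival time, an analogous bookkeeping argument using $E_2$ is intended to yield the analogous inequality $T(u)\leq a_{k(u)}$. Finally, at time $T(x)$ one has $k(x)\geq n$, and the inequality $T(x)>a_{k(x)}$ needed for acceptance of $x$ would be deduced by combining $T(x)>a_n$ (from $E_1$) with a bound, coming from $E_2$, on the number of non-$M$ max-at-arrival elements still maximal at time $T(x)$.

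The main obstacle will be the bookkeeping for non-$M$ elements that transiently become maxima in the induced subposet and are later demoted by the arrival of a dominating element. The joint counter $k$ used by the actual strategy differs in general from the two separate counters used to define $E_1$ and $E_2$, and reconciling these bookkeepings is the technical crux of the proof; it may force a refinement of either the monotonization step on $(a_k)$ or the precise formulation of $E_1$ and $E_2$.
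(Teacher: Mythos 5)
Your decomposition is genuinely different from the paper's, and unfortunately it runs into exactly the gap you flag at the end. Your $E_2$ is the last-arrival game applied to the set of \emph{non-maximal} elements of $P$, and from it you can only deduce that the $j$-th non-maximal element to arrive has arrival time less than $a_j$. But the poset strategy's counter $k(u)$ for such an element $u$ is the number of elements currently maximal in the induced subposet, which bears no useful relation to $j$. Concretely, take $P$ to be a long chain $u_1<u_2<\dots<u_{10}$ together with two incomparable isolated maximal elements $v,w$, so that $M=\{u_{10},v,w\}$ and the non-$M$ elements are $u_1,\dots,u_9$. If the chain elements arrive in increasing order and early, then when $u_9$ arrives it is the ninth non-$M$ arrival, so your $E_2$ only gives $T(u_9)<a_9$; but at that moment $u_9$ is the \emph{sole} maximum of the induced subposet, so $k(u_9)=1$, and $u_9$ is accepted whenever $a_1<T(u_9)<a_9$, an event of positive probability entirely compatible with $E_1\cap E_2$. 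So your events do not imply that $x$ is the element picked.

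The paper applies the last-arrival game not to $P\setminus M$, but to the set $S$ of \emph{maximal elements of $P-x$}: the $m-1$ other maximal elements of $P$ together with the $q$ elements dominated in $P$ only by $x$, so $|S|=m+q-1$. That is the right set because of two structural facts that your bookkeeping lacks. First, every element of $S$ that has arrived is still maximal in the induced subposet for as long as $x$ has not arrived, so if $u$ is maximal in the induced subposet and $j$ elements of $S$ have already arrived, then $k(u)\ge j+1$ (or $k(u)\ge j$ if $u\in S$). Second, if $u\notin S\cup\{x\}$ is maximal in the induced subposet, it must be dominated by some \emph{not-yet-arrived} element of $S$, so $u$ arrives strictly before the $(j+1)$-st element of $S$, and the event that the last-arrival game on $S$ rejects everything forces $T(u)<a_{j+1}\le a_{k(u)}$. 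Combining this with $T(x)>a_m$ (probability at least $\theta_1$) and noting that at time $T(x)$ exactly the $m$ maximal elements of $P$ are maximal in the induced subposet (so $k(x)=m$, not merely $k(x)\ge m$ as you write) yields acceptance of $x$ on the intersection. Your instinct to decouple two instances of the last-arrival game and multiply probabilities is exactly right; the fix is simply to run the second instance on $\mathrm{max}(P-x)$ rather than on $P\setminus M$, at which point the bookkeeping you were worried about closes up.
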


\begin{proof} We condition on the number $m$ of maximal elements in the partial order, and the number $q$ of elements that are dominated only by the last maximal element $x$ to arrive. We condition on $x$ arriving in the time interval $[a_m, 1]$. The probability for this is at least $\theta_1$.

With probability at least $\theta_2$, the $m+q-1$ maximal elements in the partial order $P - x$ will arrive early enough to guarantee that none of them, and thereby no element at all, is accepted before $x$ arrives (we have conditioned on the $m-1$ remaining maximal elements arriving before $x$ arrives, but this only increases the probability of success).
\end{proof}


\begin{thebibliography}{99}

\bibitem{B00} Thomas Bruss, \emph{Sum the Odds to One and Stop}, Annals of Probability {\bf 28} (2000), 1384--1391.

\bibitem{B03}  Thomas Bruss, \emph{A note on Bounds for the Odds-Theorem of Optimal Stopping}, Annals of Probability {\bf 31} (2003), 1859--1862.

\bibitem{FW10} Ragnar Freij and Johan W\"astlund, \emph{Partially ordered secretaries}, Electronic Communications in Probability {\bf 15} (2010), 504--507.

\bibitem{G08} Nicholas Georgiou, Malgorzata Kuchta, Michal Morayne and Jaroslaw Niemiec. \emph{On a universal best choice algorithm for partially ordered sets}, Random Structures and Algorithms, Volume 32 Issue 3, (May 2008), 263--273.
  
\bibitem{K10} Jakub Kozik, \emph{Dynamic threshold strategy for universal best choice problem}, DMTCS Proceedings, 21st International Meeting on Probabilistic, Combinatorial, and Asymptotic Methods in the Analysis of Algorithms (2010), 439--452. 
  
\bibitem{P99} John Preater, \emph{The best-choice problem for partially ordered objects}, Oper. Res. Lett. {\bf 25} (1999), 187--190.

\end{thebibliography}
\end{document}